\newtheorem{thm}{Theorem}
\newtheorem{defn}{Definition}
\newtheorem{lemma}{Lemma}
\newtheorem{pro}{Proposition}
\newtheorem{rk}{Remark}
\begin{document}
\title[Gibbs measures for the Potts-SOS model]{Extremality of translation-invariant Gibbs measures for the Potts-SOS model on the Cayley tree}

\author{M. M. Rahmatullaev, M. A. Rasulova }

 \address{M. \ M. \ Rahmatullaev \\ Uzbekistan Academy of Sciences V.I.Romanovskiy Institute of Mathematics, Namangan, Uzbekistan.}
 \email{mrahmatullaev@rambler.ru}

 \address{M.\ A.\ Rasulova \\ Namangan State University, Namangan, Uzbekistan.}
 \email{m$\_$rasulova$\_$a@rambler.ru}

\begin{abstract}
In this paper, we consider the Potts-SOS model where the spin takes values in the set $\{0, 1, 2\}$ on the Cayley tree of order two. We describe all the translation-invariant splitting Gibbs measures
for this model in some conditions. Moreover, we investigate whether these Gibbs measures are extremal or non-extremal in the set of all Gibbs measures.
\end{abstract} \maketitle

{\bf{Key words.}} Cayley tree, configuration, Potts-SOS model,
translation-invariant splitting Gibbs measure, extreme measure, tree-indexed Markov chain, Kesten-Stigum condition, extremality.

\section{Introduction}

One of the central problems in the theory of Gibbs measures (GMs)
is to describe infinite-volume (or limiting) GMs corresponding to
a given Hamiltonian. The existence of such measures for a wide
class of Hamiltonians was established in the ground-breaking work
of Dobrushin (see, e.g., \cite{1}). However, a complete analysis
of the set of limiting GMs for a specific Hamiltonian is often a
difficult problem.

In this paper, we consider the Potts-SOS model, with spin values
$0, 1, 2$ on the Cayley tree (CT). Models on a CT were discussed
in Refs. \cite{2} and \cite{4}-\cite{6}. A classical example of
such a model is the Ising model, with two values of spin $-1$ and $1$.
It was considered in Refs. \cite{14}, \cite{2}, \cite{6}, \cite{16}, \cite{17} and became a focus of active research in the
first half of the 90s and afterwards; see Refs.
\cite{14}, \cite{7}-\cite{13}.

In \cite{18} all translation-invariant splitting Gibbs measures (TISGMs) for the
Potts model on the CT are described. In \cite{19}, \cite{20}
periodic Gibbs measures, in \cite{21}-\cite{23} weakly periodic
Gibbs measures for the Potts model are studied.

In \cite{26}, \cite{27} translation-invariant and periodic Gibbs
measures for the SOS model on the CT are studied.

Model considered in this paper (Potts-SOS model) is
generalization of the Potts and SOS (solid-on-solid) models. In
\cite{15} some translation-invariant Gibbs measures for the
Potts-SOS model on the CT are studied. Periodic Gibbs measures are
studied for the Potts-SOS model on the CT in \cite{25}. In this
paper we will study all the TISGMs for this model under some conditions. Next we investigate whether these Gibbs measures are extremal or non-extremal in the set of all Gibbs measures.

\section{Main definitions and known facts}

The Cayley tree $\Gamma^k$ (See \cite{14}) of order $ k\geq 1 $ is
an infinite tree, i.e., a graph without cycles, from each vertex
of which exactly $ k+1 $ edges issue. Let $\Gamma^k=(V, L, i)$ ,
where $V$ is the set of vertices of $ \Gamma^k$, $L$ is the set of
edges of $ \Gamma^k$ and $i$ is the incidence function associating
each edge $l\in L$ with its endpoints $x,y\in V$. If
$i(l)=\{x,y\}$, then $x$ and $y$ are called {\it nearest
neighboring vertices}, and we write $l=<x,y>$.

The distance $d(x,y), x,y\in V$ on the Cayley tree is defined by the formula
$$d(x,y)=\min\{ d | \exists x=x_0,x_1,...,x_{d-1},x_d=y\in V \ \
\mbox{such that}  \ \ <x_0,x_1>,...,<x_{d-1},x_d> \}.$$

For the fixed $x^0\in V$ we set $ W_n=\{x\in V\ \ |\ \
d(x,x^0)=n\},$
\begin{equation}
V_n=\{x\in V\ \ | \ \  d(x,x^0)\leq n\}, \ \ L_n=\{l=<x,y>\in L \
\ |\ \  x,y\in V_n\}.
\end{equation}
 Denote $|x|=d(x,x^0)$, $x\in V$.

A collection of the pairs $<x,x_1>,...,<x_{d-1},y>$ is called a
{\sl path} from $x$ to  $y$ and we write $\pi(x,y)$ .
 We write $x<y$ if
the path from $x^0$ to $y$ goes through $x$.

It is known (see \cite{14}) that there exists a one-to-one
correspondence between the set  $V$ of vertices  of the Cayley
tree of order $k\geq 1$ and the group $G_{k}$ of the free products
of $k+1$ cyclic  groups $\{e, a_i\}$, $i=1,...,k+1$ of the second
order (i.e. $a^2_i=e$, $a^{-1}_i=a_i$) with generators $a_1,
a_2,..., a_{k+1}$, see Figure \ref{cayley}.

\begin{figure}[h!]
    \begin{center}
        \includegraphics[width=14cm]{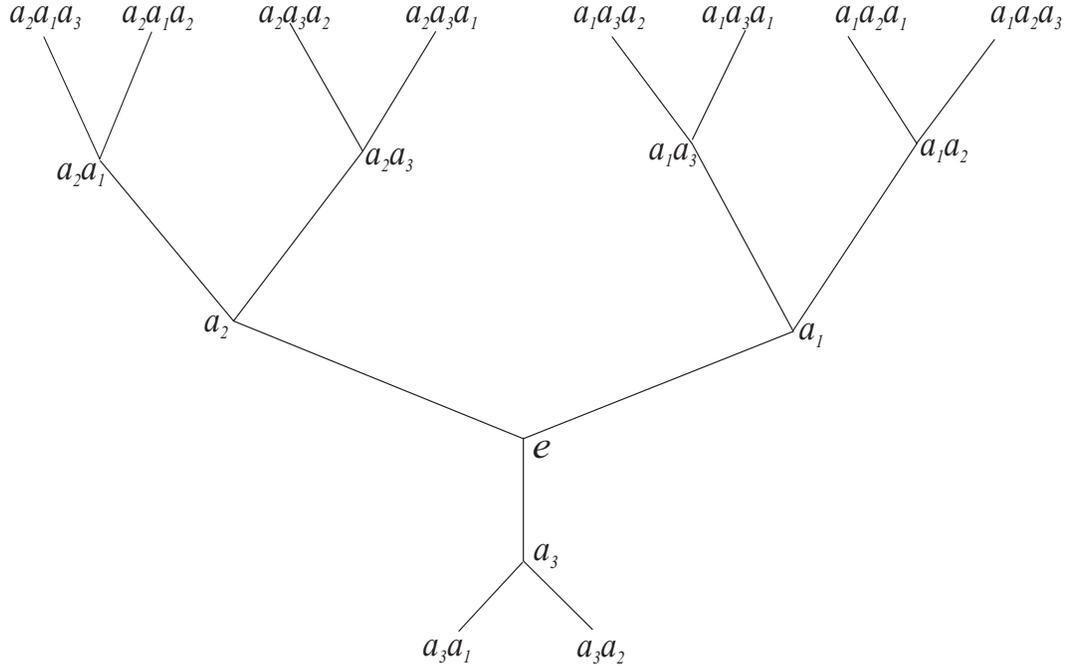}
    \end{center}
    \caption{The Cayley tree $\tau^2$ and elements of the group $G_2$ representation of vertices} \label{cayley}
\end{figure}

Denote the set of "direct successors" of $x\in G_k$ by $S(x)$. Let
$S_1(x)$ be the set of all nearest neighboring vertices of
$x\in G_k,$ i.e. $S_1(x)=\{y\in G_k: <x,y>\}$ and $\{x_{\downarrow}\}
=S_1(x)\setminus S(x)$.

\section{The model and a system vector-valued functional
equations}

Here we shall give main definitions and facts about the model.
Consider model where the spin takes values in the set
$\Phi=\{0,1,2,...,m\}, m\geq 1$. For $A\subseteq V$ a spin {\it
configuration} $\sigma_A$ on $A$ is defined as a function
 $x\in A\to\sigma_A(x)\in\Phi$; the set of all configurations coincides with
$\Omega_A=\Phi^{A}$. Denote $\Omega=\Omega_V$ and
$\sigma=\sigma_V.$

  A configuration that is invariant with respect to all
shifts is called {\it translational-invariant}.

The Hamiltonian  of the Potts-SOS model with nearest-neighbor
interaction has the form
\begin{equation} \label{2}
 H(\sigma)=-J \sum_{<x,y>\in L}|\sigma(x)-\sigma(y)|-
J_p \sum_{<x,y>\in L}\delta_{\sigma(x)\sigma(y)},
\end{equation}
 where $J, J_p\in R$ are
nonzero coupling constants.

It is known \cite{15} that any SGM of the model (\ref{2}) corresponds to a solution of the following equation:

\begin{equation} \label{6}
h_{x}^{*}=\sum_{y\in S(x)}F\big(h_{y}^{*},m,\theta,r\big)
\end{equation}
where  $x\in V \backslash \{x^{0}\}$,
\begin{equation} \label{7}
\theta=\exp(J\beta),   \ \ r=\exp(J_{p}\beta)
\end{equation}
and also $\beta=1/T$ is the inverse temperature. Here $h_{x}^{*}$
represents the vector
$(h_{0,x}-h_{m,x},h_{1,x}-h_{m,x},...,h_{m-1,x}-h_{m,x})$ and the
vector function $F(.,m,\theta,r):R^{m}\rightarrow R^{m}$ is
defined as follows
$$
F(h,m,\theta,r)=\big(F_{0}(h,m,\theta,r),F_{1}(h,m,\theta,r),...,F_{m-1}(h,m,\theta,r)\big),
$$
where
\begin{equation} \label{8}
F_{i}(h,m,\theta,r)=\ln\frac{\sum_{j=0}^{m-1}\theta^{|i-j|}r^{\delta_{ij}}e^{h_{j}}+\theta^{m-i}r^{\delta_{mi}}}{\sum_{j=0}^{m-1}\theta^{m-j}r^{\delta_{mj}}e^{h_{j}}+r},
\end{equation}
$ h=(h_0, h_1,...,h_{m-1}), i=0,1,2,...,m-1. $

Namely, for any collection of functions satisfying the functional
equation (\ref{6}) there exists a unique splitting Gibbs measure,
the correspondence being one-to-one.

\section {Translation-invariant Gibbs measures}

\begin{defn} For a SGM $\mu$, if $h_{j,x}$ is independent from $\{x:
h_{j,x}\equiv h_j, x\in V, j\in \Phi$\}, $\mu$ is called
\textit{translation-invariant}$\big($\textbf{TI}$\big)$.
\end{defn}

Let $m=2$, that is $\Phi=\{0,1,2\}$. In this case, for the TISGMs (\ref{6}) has the form
$$h=kF(h,\theta,r),$$ where $h=(h_0,h_1)$.
 Introducing the notation $l_0=e^{h_0}, l_1=e^{h_1}$,
we obtain the following the system of equations

\begin{equation} \label{11}
\left\{%
\begin{array}{ll}
    l_0=\big(\frac{rl_0+\theta l_1+\theta^2}{\theta^2l_0+\theta l_1+r}\big)^k,\\[3mm]
    l_1=\big(\frac{\theta l_0+rl_1+\theta}{\theta^2l_0+\theta l_1+r}\big)^k. \\[3mm]
   \end{array}%
\right.
\end{equation}

Let $k=2$. Denote $\sqrt{l_0}=x, \sqrt{l_1}=y$. Then from (\ref{11}) we get

\begin{equation} \label{12}
\left\{%
\begin{array}{ll}
    x=\frac{rx^2+\theta y^2+\theta^2}{\theta^2x^2+\theta y^2+r},\\[3mm]
    y=\frac{\theta x^2+ry^2+\theta}{\theta^2x^2+\theta y^2+r}. \\[3mm]
   \end{array}%
\right.
\end{equation}

After simplifying above the system of equations
(\ref{12}), we have

\begin{equation} \label{13}
\left\{%
\begin{array}{ll}
    \theta^2x^3-rx^2+(\theta y^2+r)x-\theta y^2-\theta^2=0,\\[3mm]
    \theta y^3-ry^2+(\theta^2x^2+r)y-\theta x^2-\theta=0. \\[3mm]
   \end{array}%
\right.
\end{equation}

The system of equations (\ref{13}) can be rewritten as

\begin{equation} \label{14}
\left\{%
\begin{array}{ll}
    (x-1)(\theta^2x^2+\theta^2x+\theta^2-rx+\theta y^2)=0,\\[3mm]
    \theta y^3-ry^2+(\theta^2x^2+r)y-\theta x^2-\theta=0. \\[3mm]
   \end{array}%
\right.
\end{equation}

Obviously, the solutions of (\ref{14}) are the solutions of the
following system of equations
\begin{equation} \label{15}
\left\{%
\begin{array}{ll}
    x-1=0,\\[3mm]
    \theta y^3-ry^2+(\theta^2x^2+r)y-\theta x^2-\theta=0, \\[3mm]
   \end{array}%
\right.
\end{equation}

or the solutions of the following system of equations

\begin{equation} \label{16}
\left\{%
\begin{array}{ll}
    \theta^2x^2+\theta^2x+\theta^2-rx+\theta y^2=0,\\[3mm]
    \theta y^3-ry^2+(\theta^2x^2+r)y-\theta x^2-\theta=0. \\[3mm]
   \end{array}%
\right.
\end{equation}

Let us consider (\ref{15}). Substituting $x=1$ into the second equation of
(\ref{15}) we get

\begin{equation} \label{17}
\theta y^3-ry^2+(\theta^2+r)y-2\theta=0.
\end{equation}

For
\begin{equation}
y=z+\frac{r}{3\theta},
\end{equation}
we reduce (\ref{17}) to the equation
\begin{equation} \label{20}
z^3+\big(\frac{r}{\theta}+\theta-\frac{r^2}{3\theta^2}\big)z+\big(\frac{r}{3}+\frac{r^2}{3\theta^2}-\frac{2r^3}{27\theta^3}-2\big)=0.
\end{equation}

Denote
\begin{equation} \label{21}
p=\frac{r}{\theta}+\theta-\frac{r^2}{3\theta^2},
q=\frac{r}{3}+\frac{r^2}{3\theta^2}-\frac{2r^3}{27\theta^3}-2.
\end{equation}

After solving the equation $p=0$ in terms of $r$, we have the solutions
$r_{1,2}=\frac{3\pm \sqrt{9+12\theta}}{2}\theta$. Since $r>0,
\theta>0$, we get $r_1=\frac{3+\sqrt{9+12\theta}}{2}\theta$. Putting $r_1$ into $q$ in (\ref{21}) and solving the equation $q=0$ in
terms of $\theta$, we have the solution
$\theta_1=3\sqrt[3]{2}(\sqrt[3]{2}-1)$.\\
 Substituting $r_1, \theta_1$ into the equation (\ref{20}) we get the equation $z^3=0$. It follows that the equation (\ref{17}) has one positive root $y=\frac{r_1}{3\theta_1}$.\\

From (\ref{21}), we obtain
\begin{equation} \label{*21}
Q(r,\theta)=(\frac{p}{3})^3+(\frac{q}{2})^2=\frac{1}{27}(-\frac{1}{3}\frac{r^2}{\theta^2}+\frac{r}{\theta}+\theta)^3+\frac{1}{4}(-\frac{2}{27}\frac{r^3}{\theta^3}+\frac{1}{3}\frac{r^2}{\theta^2}+\frac{1}{3}r-2)^2=$$$$=
-\frac{1}{108\theta^4}(r^4+2r^3\theta^2+r^2\theta^4-12r^3\theta-12r^2\theta^3-12\theta^5r-4\theta^7+36\theta^2r^2+36\theta^4r-108\theta^4).
\end{equation}
For $\theta=\theta_1=3\sqrt[3]{2}(\sqrt[3]{2}-1)$ we have
$$Q(r,\theta_1)=\frac{116+73\sqrt[3]{4}+92\sqrt[3]{2}}{34992}\big(-r^2+36(1-2\sqrt[3]{2}+\sqrt[3]{4})r+324(13-4\sqrt[3]{2}-5\sqrt[3]{4})\big)\cdot$$$$\cdot\big(r-18+9\sqrt[3]{4}\big)^2$$

Using Cardano's formula one can prove the following

\begin{lemma} \label{l1} Let $\theta=3\sqrt[3]{2}(\sqrt[3]{2}-1)$.
There exists $r_{c}(\approx 4.221293186)$ such that

 $\bullet$ If $r\in (0, r_{c})$ then the equation (\ref{17}) has one positive solution.

$\bullet$ If $r=r_{c}$ then the equation (\ref{17})
has two positive solutions.

$\bullet$ If $r\in (r_{c}, \infty)$ then the equation (\ref{17})
has three positive solutions.
\end{lemma}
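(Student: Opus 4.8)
The plan is to reduce counting the positive solutions of (\ref{17}) to a sign analysis of the Cardano quantity $Q(r,\theta_1)$, exploiting the explicit factorization displayed just above the statement.

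First I would record two elementary facts about $f(y):=\theta y^3-ry^2+(\theta^2+r)y-2\theta$. Since $\theta>0$, we have $f(0)=-2\theta<0$ and $f(y)\to+\infty$ as $y\to+\infty$, so (\ref{17}) always has at least one positive root. Moreover, the coefficients of $f$ have sign pattern $(+,-,+,-)$, giving three sign changes, whereas $f(-y)$ has pattern $(-,-,-,-)$ with none; by Descartes' rule of signs $f$ has no negative real root, so \emph{every} real root of (\ref{17}) is positive. Hence the number of positive solutions equals the number of distinct real roots, and the latter is governed by Cardano's formula applied to the depressed cubic (\ref{20}): there is one real root when $Q(r,\theta_1)>0$, a repeated root when $Q(r,\theta_1)=0$, and three distinct real roots when $Q(r,\theta_1)<0$.

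Next I would read off the signs from the factorization
$$Q(r,\theta_1)=K\cdot g(r)\cdot\big(r-18+9\sqrt[3]{4}\big)^2,$$
with $K=\tfrac{116+73\sqrt[3]{4}+92\sqrt[3]{2}}{34992}>0$ and $g(r)=-r^2+36(1-2\sqrt[3]{2}+\sqrt[3]{4})r+324(13-4\sqrt[3]{2}-5\sqrt[3]{4})$. The square factor vanishes precisely at $r_1:=18-9\sqrt[3]{4}$, which one checks equals $\tfrac{3+\sqrt{9+12\theta_1}}{2}\,\theta_1$; this is exactly the value at which (\ref{20}) degenerates to $z^3=0$, so there (\ref{17}) has the triple root $y=r_1/(3\theta_1)$, that is, a single positive solution. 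For every other $r$ the square factor is strictly positive, so $\operatorname{sign}Q(r,\theta_1)=\operatorname{sign}g(r)$. The downward parabola $g$ has two real zeros $\tfrac{b\pm\sqrt{b^2+4c}}{2}$ (here $b=36(1-2\sqrt[3]{2}+\sqrt[3]{4})$, $c=324(13-4\sqrt[3]{2}-5\sqrt[3]{4})$), one negative and one positive; I would verify numerically that the positive one is $r_c\approx4.221293186$. Thus $g>0$ on $(0,r_c)$, $g(r_c)=0$, and $g<0$ on $(r_c,\infty)$.

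Assembling these observations: on $(0,r_c)$ we have $Q(r,\theta_1)\ge0$ with equality only at $r_1$, so there is exactly one positive solution throughout (a triple root at $r_1$, a single simple root elsewhere). At $r=r_c$ the quadratic factor vanishes while $p\neq0$ (as $r_c\neq r_1$), so (\ref{20}) has a genuine double root together with a simple root, giving two distinct positive solutions; and for $r>r_c$ we have $Q(r,\theta_1)<0$, hence three distinct positive solutions. The step needing care is the bookkeeping at $r_1\in(0,r_c)$: although $Q$ vanishes there as well, the corresponding root is \emph{triple} rather than double, so it still contributes a single value and does not break the ``one solution'' regime. The main technical point is thus to separate the two mechanisms producing $Q=0$---the triple root from $p=q=0$ at $r_1$ versus the double-plus-simple root from $g(r_c)=0$---and to confirm $r_1<r_c$.
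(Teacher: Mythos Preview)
Your proposal is correct and follows essentially the paper's own route: the paper simply states ``Using Cardano's formula one can prove the following'' after displaying the factorization of $Q(r,\theta_1)$, and you carry out precisely that program, supplying the Descartes' rule argument for positivity and the careful bookkeeping at $r_1$ (the triple-root point inside $(0,r_c)$) that the paper leaves implicit. There is no genuine methodological difference.
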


Now we consider (\ref{16}). From (\ref{16}) we get

\begin{equation} \label{22}
x=\frac{\theta y(\theta^2-y+ry-r)}{-\theta^3 y+\theta^2+\theta
ry-r}. \end{equation}

Substituting (\ref{22}) into the first equation of (\ref{16}), we obtain

$$f(y,r,\theta)=\theta^2 (\theta+1)(r^2-2\theta
r+\theta^3-\theta^2+\theta)y^4-\theta(r-\theta^2)(r^2+(\theta^2+1)r-3\theta^2)y^3+$$

\begin{equation} \label{23}
+((\theta+1)r+\theta^3)(r-\theta^2)^2y^2-(r+\theta^2)(r-\theta^2)^2y+\theta(r-\theta^2)^2=0.
\end{equation}

The equation (\ref{23}) can be rewritten as
$$f(y,r,\theta)=(ay^2+by+c)(dy^2+ey+f),$$
where

$$ad=\theta^2(\theta+1)(r^2-2\theta r+\theta^3-\theta^2+\theta),$$
$$ae+bd=-\theta(r-\theta^2)(r^2+(\theta^2+1)r-3\theta^2),$$
$$af+be+cd=((\theta+1)r+\theta^3)(r-\theta^2)^2,$$
$$bf+ce=-(r+\theta^2)(r-\theta^2)^2,$$
$$cf=\theta(r-\theta^2)^2.$$

Let $D_1(r, \theta)=b^2-4ac$ and $D_2(r, \theta)=e^2-4df$.

We denote the following sets
$$B_1=\{(r, \theta)\in\mathbb{R}_{+}^{2}: D_1(r, \theta)>0, D_2(r, \theta)>0\},$$
$$B_2=\{(r, \theta)\in\mathbb{R}_{+}^{2}: D_1(r, \theta)>0,D_2(r, \theta)=0\vee D_1(r, \theta)=0,D_2(r, \theta)>0\},$$
$$B_3=\{(r, \theta)\in\mathbb{R}_{+}^{2}: D_1(r, \theta)=0,D_2(r,
\theta)=0\vee D_1(r, \theta)>0,D_2(r, \theta)<0\vee $$
$$\vee D_1(r, \theta)<0,D_2(r, \theta)>0\},$$
$$B_4=\{(r, \theta)\in\mathbb{R}_{+}^{2}: D_1(r, \theta)=0,D_2(r, \theta)<0\vee D_1(r, \theta)<0,D_2(r, \theta)=0\},$$
$$B_5=\{(r, \theta)\in\mathbb{R}_{+}^{2}: D_1(r, \theta)<0, D_2(r, \theta)<0\}.$$

 Thus we can prove the following

\begin{lemma} \label{l2} Let $\theta=3\sqrt[3]{2}(\sqrt[3]{2}-1)$,
then the following assertions hold

 $\bullet$ If $r\in B_1(r)$ then the equation (\ref{23}) has four
 solutions which are positive.

$\bullet$ If $r\in B_2(r)$ then the equation (\ref{23}) has three
positive solutions.

$\bullet$ If $r\in B_3(r)$ then the equation (\ref{23}) has two
positive solutions.

$\bullet$ If $r\in B_4(r)$ then the equation (\ref{23}) has one positive
solution.

$\bullet$ If $r\in B_5(r)$ then the equation (\ref{23}) has no
solution.
\end{lemma}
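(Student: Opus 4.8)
The plan is to exploit the factorization of the quartic (\ref{23}) into two real quadratic factors and to reduce the counting of positive roots to a sign analysis of the two discriminants $D_1,D_2$ together with a Vieta sign check. Since $\theta$ is fixed at $\theta_1=3\sqrt[3]{2}(\sqrt[3]{2}-1)$, every coefficient in (\ref{23}) becomes an explicit function of the single variable $r$, and the sets $B_1,\dots,B_5$ specialize to unions of $r$-intervals, which is why the lemma is phrased with $B_i(r)$.

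First I would justify that the decomposition $f(y,r,\theta)=(ay^2+by+c)(dy^2+ey+f)$ used above indeed holds with \emph{real} coefficients $a,\dots,f$. A quartic with real coefficients always splits into a product of two real quadratics; concretely, solving the displayed coefficient-matching system amounts to selecting a real root of the associated resolvent cubic, which exists because every cubic over $\mathbb{R}$ has a real root. I would fix such a real branch of $(a,b,c,d,e,f)$ as functions of $r$ and record that $D_1=b^2-4ac$ and $D_2=e^2-4df$ are the discriminants of the two factors.

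Next comes the root count. The real roots of $f$ are precisely the real roots of $ay^2+by+c$ together with those of $dy^2+ey+f$, and a quadratic with nonzero leading coefficient has two, one, or zero real roots according as its discriminant is positive, zero, or negative. Reading off the sign combinations in the definitions of $B_1,\dots,B_5$ then gives the totals $4=2+2$ on $B_1$, $3=2+1$ on $B_2$, $2=1+1=2+0$ on $B_3$, $1=1+0$ on $B_4$, and $0$ on $B_5$, matching the asserted counts.

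It remains to show that every real root produced above is strictly positive. For the factor $ay^2+by+c$, when its roots are real they are both positive exactly when the Vieta quantities satisfy $-b/a>0$ and $c/a>0$; the same criterion applies to $dy^2+ey+f$ with $e,d,f$. I would verify these inequalities by substituting $\theta=\theta_1$ and checking the sign of each coefficient $a,\dots,f$ as a function of $r$ on the relevant intervals, also treating the degenerate cases $a=0$ or $d=0$ where a quadratic factor drops to a linear one. The main obstacle is exactly this positivity bookkeeping: one must control the signs of the six coefficients across all the boundary configurations defining the $B_i$ and confirm that the chosen real branch of the factorization never develops complex coefficients, so that the discriminant signs faithfully count positive roots.
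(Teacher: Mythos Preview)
Your approach matches the paper's exactly: the paper simply records the factorization $f=(ay^2+by+c)(dy^2+ey+f)$, writes down the coefficient-matching system, defines $D_1,D_2$, and then states the lemma as an immediate consequence of the sign pattern of the two discriminants, with no further argument given. Your outline of the resolvent-cubic justification for the real factorization and the Vieta-based positivity check in fact goes beyond what the paper itself supplies.
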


With respect to (\ref{21}) and (\ref{*21}) we denote the following sets
$$A_1=\{(r, \theta)\in\mathbb{R}_{+}^{2}:r\leq3\theta^2, Q>0\}\cup\{(r, \theta)\in\mathbb{R}_{+}^{2}:r\leq3\theta^2, p=0, q=0\},$$
$$A_2=\{(r, \theta)\in\mathbb{R}_{+}^{2}:r\leq3\theta^2, Q=0\}\cap\{(r, \theta)\in\mathbb{R}_{+}^{2}:p\neq0 \vee q\neq0\},$$
$A_3=\{(r, \theta)\in\mathbb{R}_{+}^{2}:r\leq3\theta^2, Q<0\},$
$A_4=\{(r, \theta)\in\mathbb{R}_{+}^{2}:r>3\theta^2, Q>0\},$
$$A_5=\{(r, \theta)\in\mathbb{R}_{+}^{2}:r>3\theta^2, Q=0\}\cap\{(r, \theta)\in\mathbb{R}_{+}^{2}:p\neq0 \vee q\neq0\},$$
$$A_6=\{(r, \theta)\in\mathbb{R}_{+}^{2}:r>3\theta^2, Q<0\}.$$

Let $N$ be the number of TISGMs for the Potts-SOS model.
 \begin{thm} \label{t2} Let $k=2, m=2$. The following statements hold for the $N$
\begin{equation}\label{*t2}
 N=\left\{ \begin{array}{ll}
    1, \   \ \mbox{if}  \ (r, \theta)\in A_1,\\[3mm]
    2, \   \ if  \ (r, \theta)\in A_2\cup (A_4\cap B_4)\cup (A_5\cap B_5),\\[3mm]
    3, \   \ if  \ (r, \theta)\in A_3\cup (A_4\cap B_3)\cup (A_5\cap B_4),\\[3mm]
    4, \   \ if  \ (r, \theta)\in (A_4\cap B_2)\cup (A_5\cap B_3)\cup (A_6\cap B_4),\\[3mm]
    5, \   \ if  \ (r, \theta)\in (A_4\cap B_1)\cup (A_5\cap B_2)\cup (A_6\cap B_3),\\[3mm]
    6, \   \ if  \ (r, \theta)\in  (A_5\cap B_1)\cup (A_6\cap B_2),\\[3mm]
    7, \   \ if  \ (r, \theta)\in  A_6\cap B_1.\\[3mm]
 \end{array} \right.\end{equation}
\end{thm}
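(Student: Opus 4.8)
The plan is to count the positive solutions $(x,y)$ of the system (\ref{14}) and invoke the one-to-one correspondence between such solutions and TISGMs. Since (\ref{14}) is the factored form of (\ref{13}), every positive solution satisfies either (\ref{15}) (the branch $x=1$) or (\ref{16}). Writing $N_1$ and $N_2$ for the number of admissible solutions coming from (\ref{15}) and (\ref{16}) respectively, I would show the two families are disjoint on the relevant parameter regions and then set $N=N_1+N_2$. Disjointness is immediate: a common solution must have $x=1$ and, by the first equation of (\ref{16}), satisfy $3\theta^2-r+\theta y^2=0$, i.e. $y^2=(r-3\theta^2)/\theta$, which both forces $r>3\theta^2$ and pins $y$ to one value, so on the open sets cut out by the strict inequalities defining the $A_i$ and $B_j$ the two families do not meet.

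For the family (\ref{15}) I would substitute $x=1$ to get the cubic (\ref{17}) and depress it to (\ref{20})–(\ref{21}). A sign-change (Descartes) argument shows (\ref{17}) has no nonpositive roots: the coefficient pattern of $\theta y^3-ry^2+(\theta^2+r)y-2\theta$ is $+,-,+,-$ with three changes, while replacing $y$ by $-y$ makes all coefficients the same sign, so there are no negative roots and $y=0$ is not a root. Hence $N_1$ equals the number of real roots of the depressed cubic, which is controlled by $Q=(p/3)^3+(q/2)^2$ from (\ref{*21}): $N_1=1$ if $Q>0$, $N_1=2$ if $Q=0$ (double root, with the special point $p=q=0$ counted once), and $N_1=3$ if $Q<0$. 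This is exactly Lemma \ref{l1}, and it gives $N_1=1,2,3$ on $A_1\cup A_4$, $A_2\cup A_5$, $A_3\cup A_6$; the auxiliary split $r\le 3\theta^2$ versus $r>3\theta^2$ inside the $A_i$ serves only to flag the regime in which (\ref{16}) does or does not contribute.

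For the family (\ref{16}) I would eliminate $x$ to obtain (\ref{22}), substitute into the first equation to reach the quartic (\ref{23}), and factor it as $(ay^2+by+c)(dy^2+ey+f)$ with the coefficients listed after (\ref{23}); the number of positive $y$-roots is then read off from $D_1=b^2-4ac$ and $D_2=e^2-4df$, yielding the $B_j$-count $4,3,2,1,0$ on $B_1,\dots,B_5$ of Lemma \ref{l2}. The clean link to the $A_i$ comes from rewriting the first equation of (\ref{16}) as $\theta y^2=rx-\theta^2(x^2+x+1)$: for real $y$ the right side must be nonnegative, which forces $x>0$ (since $x^2+x+1>0$ always makes it negative for $x\le 0$), and then $r\ge \theta^2\,(x+1+\tfrac1x)\ge 3\theta^2$ by AM–GM, with equality only at $x=1$ (the boundary overlapping (\ref{15})). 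Thus $N_2=0$ whenever $r\le 3\theta^2$, i.e. on $A_1,A_2,A_3$, while for $r>3\theta^2$ each positive $y$-root of (\ref{23}) automatically produces an admissible $x>0$, so $N_2$ equals the $B_j$-count.

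Adding region by region reproduces (\ref{*t2}): on $A_1,A_2,A_3$ one has $N=N_1\in\{1,2,3\}$, and on $A_4,A_5,A_6$ (where $r>3\theta^2$) one adds $N_1\in\{1,2,3\}$ to $N_2\in\{0,1,2,3,4\}$ according to the $B_j$, so a direct tabulation of all combinations gives each line of the theorem. I expect the main obstacle to lie in Lemma \ref{l2}: proving that the quartic (\ref{23}) genuinely factors into two \emph{real} quadratics on the admissible region (so that $D_1,D_2$ are meaningful), and that the signs of $D_1,D_2$ correctly count the \emph{positive} roots. A secondary point is checking the consistency of the two classifications, namely that the $B_j$ with $r>3\theta^2$ are nonempty and that the automatic positivity of $x$ derived from the AM–GM bound indeed holds on each such region; together with the disjointness already noted, this closes the count.
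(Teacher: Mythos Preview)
Your proposal is correct and follows essentially the same strategy as the paper: split the solutions of (\ref{14}) into the branch $x=1$ (equation (\ref{17})) and the branch (\ref{16}), count each via the cubic discriminant $Q$ and the quartic factorization of Lemma~\ref{l2}, and add. Your derivation of the threshold $r>3\theta^2$ via the AM--GM bound $x+1+1/x\ge 3$ is a cleaner variant of the paper's discriminant argument for the quadratic in $x$, and your explicit disjointness check and your flagged concerns about Lemma~\ref{l2} (real factorization, positivity of roots) are points the paper leaves implicit.
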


\begin{proof} We consider the first equation of (\ref{16}). We
write this in the following form
\begin{equation} \label{24}
\theta^2x^2+(\theta^2-r)x+\theta^2=-\theta y^2.
\end{equation}

RHS of (\ref{24}) is negative, thus
\begin{equation} \label{25}
\theta^2x^2+(\theta^2-r)x+\theta^2<0.
\end{equation}

For LHS of (\ref{25}), we calculate its discriminant $D=(\theta^2-r)^2-4\theta^4.$
If the discriminant is positive, then the inequality (\ref{25}) has real solutions. Therefore, we should solve
$$(-r-\theta^2)(3\theta^2-r)>0.$$
Since $-r-\theta^2<0$, it follows that $r>3\theta^2$.

 Inequality (\ref{25}) has positive solution as soon as $\theta^2-r<0$ or $r>\theta^2$. If $r>3\theta^2$, then $r>\theta^2$ also holds. If $r>3\theta^2$, the solutions of the inequality (\ref{25}) belong to
 $$\left(\frac{r-\theta^2-\sqrt{D}}{2\theta^2}, \frac{r-\theta^2+\sqrt{D}}{2\theta^2}\right).$$
 Moreover, (\ref{24}) holds in this interval.

Consequently, if $r>3\theta^2$ then the first equation of
(\ref{16}) has a positive real solution, if $r\leq3\theta^2$ then
the first equation of (\ref{16}) cannot have a positive solution, i.e., any positive real pair $(x, y)$, which is solution of the first equation of
(\ref{16}), does not satisfy  $r\leq3\theta^2$. Then
TISGMs corresponding roots of (\ref{16}) do not exist under condition $r\leq3\theta^2$.

According to the Descartes theorem, the number of positive roots of
equation (\ref{17}) is at least 1 and at most 3.

If $Q>0$, then the equation (\ref{20}) has one positive real root
and two conjugate complex roots; If $Q=0$, the all roots of the
equation (\ref{20}) are positive real and two of them are equal or
if $p=q=0$, then (\ref{20}) has one positive real root (one real
zero of multiplicity three); If $Q<0$, then the equation
(\ref{20}) has three distinct positive real roots. Hence, we can say about the number of TISGMs corresponding positive
roots of the equation (\ref{17}).

From Lemma \ref{l1} and Lemma \ref{l2}, we can see that
$$\left\{(r,\theta)\in R^2:\theta=3\sqrt[3]{2}(\sqrt[3]{2}-1), r\in(r_c, \infty)\cap B_1(r)\right\}\subset A_6\cap B_1.$$
Thus the set $A_6\bigcap B_1$ is not empty, i.e., the number of TISGMs corresponding positive
solutions of (\ref{13}) for the Potts-SOS model is up to seven.\end{proof}

\begin{rk}
Note that Theorem \ref{t2} (for $k=m=2$) generalizes results of \cite{18} and \cite{27}.\end{rk}

If $J=0$, then Potts-SOS model changes to Potts model. In this case Theorem
\ref{t2} can be restated as follows
\begin{thm} \label{t3} Let $k=2, m=2$. The following statements hold for the number $n$ of the TISGMs for the
Potts model

\begin{equation}\label{*t3}
 n=\left\{ \begin{array}{ll}
 1, \, \mbox{if} \, \ r\in(0, 1+2\sqrt{2}),\\[2mm]
 4,\,  \mbox{if} \, \ r=1+2\sqrt{2} \ \mbox{or} \  r=4,\\[2mm]
 7, \, \mbox{if} \, \ r\in(1+2\sqrt{2}, 4)\cup(4, \infty).\\
 \end{array} \right.\end{equation}
\end{thm}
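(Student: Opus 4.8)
Since $J=0$ forces $\theta=e^{J\beta}=1$, the plan is to substitute $\theta=1$ into the whole reduction leading to Theorem \ref{t2} and to count the distinct positive solutions $(x,y)$ of (\ref{13}) as $r$ varies. The first task is to record the factorizations that the Potts symmetry produces. At $\theta=1$ the $x=1$ branch (\ref{15}) is governed by (\ref{17}), which I expect to split as
$$y^3-ry^2+(1+r)y-2=(y-1)\bigl(y^2+(1-r)y+2\bigr),$$
so that (\ref{15}) always contributes $(1,1)$ together with the roots of $y^2+(1-r)y+2$; the latter are real and positive exactly when $r\geq 1+2\sqrt{2}$. In parallel I would specialize $Q$ from (\ref{*21}) to $\theta=1$ and factor it as
$$Q(r,1)=-\frac{1}{108}(r-4)^2(r^2-2r-7),$$
which is positive on $(0,1+2\sqrt{2})$, vanishes at $r=1+2\sqrt{2}$ and at $r=4$, and is negative on $(1+2\sqrt{2},4)\cup(4,\infty)$; this fixes the number of positive roots of (\ref{17}) as $1$, $2$, $3$ in the three regimes (with a double root at $r=4$).

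The branch (\ref{16}) is where the main difficulty lies. At $\theta=1$ both the numerator and the denominator of (\ref{22}) collapse to $(1-r)(1-y)$, so for $r\neq 1$ and $y\neq 1$ the relation (\ref{22}) reduces simply to $x=y$, and the corresponding $y$ are the roots of $2y^2+(1-r)y+1=0$. Accordingly I expect (\ref{23}) to factor as
$$f(y,r,1)=(r-1)^2(y-1)^2\bigl(2y^2+(1-r)y+1\bigr).$$
The delicate point is that the double root $y=1$ here is an artifact of clearing the denominator of (\ref{22}) and need not be a genuine solution of (\ref{16}). I would therefore analyze $y=1$ by hand: the second equation of (\ref{16}) then becomes an identity, while the first reduces to $x^2+(1-r)x+2=0$, which supplies $0$, $1$, or $2$ admissible values of $x$ according as $r<1+2\sqrt{2}$, $r=1+2\sqrt{2}$, or $r>1+2\sqrt{2}$. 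This is precisely why Theorem \ref{t2} cannot be invoked verbatim at $\theta=1$: the correspondence between roots of (\ref{23}) and solutions of (\ref{16}) used implicitly via (\ref{22}) degenerates at $y=1$, and a naive region count would overcount.

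With these ingredients I would assemble the distinct positive pairs in each regime. For $r\in(0,1+2\sqrt{2})$ only $(1,1)$ survives, so $n=1$. For $r\in(1+2\sqrt{2},4)\cup(4,\infty)$ I expect the seven pairs $(1,1)$, $(1,y_1)$, $(1,y_2)$, $(y_1,1)$, $(y_2,1)$, $(w_1,w_1)$, $(w_2,w_2)$, where $y_{1,2}$ solve $y^2+(1-r)y+2=0$ and $w_{1,2}$ solve $2w^2+(1-r)w+1=0$, to be pairwise distinct, giving $n=7$; the only checks needed are that each quadratic has distinct roots (true for $r>1+2\sqrt{2}$) and that no $y_i$ or $w_i$ equals $1$ (which fails solely at $r=4$). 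At the two boundary values the double roots of the two quadratics merge these ordered solutions in pairs, leaving $(1,1),(1,\sqrt{2}),(\sqrt{2},1),(\sqrt{2}/2,\sqrt{2}/2)$ at $r=1+2\sqrt{2}$ and $(1,1),(1,2),(2,1),(1/2,1/2)$ at $r=4$, so $n=4$ in both cases. Collecting the three regimes yields (\ref{*t3}).

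The step I expect to be the real obstacle is not any single computation but the bookkeeping of the coincidences forced by the $\theta=1$ (Potts) symmetry: the spurious root $y=1$ of (\ref{23}) and the additional degeneracy at $r=4$ (where $Q=0$ yet $r\neq 1+2\sqrt{2}$) must be handled directly so that only genuinely distinct positive solutions are counted.
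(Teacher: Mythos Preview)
Your proposal is correct, and in fact it is considerably more careful than what the paper supplies. The paper does not give a self-contained proof of Theorem~\ref{t3}: it simply states that the $J=0$ case reduces the Potts--SOS model to the Potts model, presents the result as a restatement of Theorem~\ref{t2} at $\theta=1$, and defers to \cite{18} for details.

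Your route is genuinely different in that you do \emph{not} try to read off the answer from the region decomposition $A_i\cap B_j$ of Theorem~\ref{t2}; instead you specialize the governing polynomials to $\theta=1$, factor them explicitly, and count solutions by hand. This pays off because, as you correctly observe, the machinery behind Theorem~\ref{t2} degenerates at $\theta=1$: the rational parametrization (\ref{22}) collapses to the indeterminate form $(1-r)(1-y)/[(1-r)(1-y)]$, and (\ref{23}) acquires the spurious double factor $(y-1)^2$ coming from clearing that denominator. A literal application of the $B_i$-classification (which is phrased in terms of discriminants of two quadratic factors of (\ref{23})) would therefore miscount. Your direct factorizations
\[
y^3-ry^2+(1+r)y-2=(y-1)\bigl(y^2+(1-r)y+2\bigr),\qquad
f(y,r,1)=(r-1)^2(y-1)^2\bigl(2y^2+(1-r)y+1\bigr),
\]
together with the separate treatment of the $y=1$ fiber of (\ref{16}) (yielding $x^2+(1-r)x+2=0$), give exactly the seven ordered pairs for $r>1+2\sqrt{2}$, $r\neq 4$, and the collapses to four pairs at $r=1+2\sqrt{2}$ and $r=4$. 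All of your auxiliary checks (the sign of $Q(r,1)=-\tfrac{1}{108}(r-4)^2(r^2-2r-7)$, positivity of the quadratic roots, and pairwise distinctness away from $r=4$) are accurate. So your argument both recovers the paper's statement and explains why the generic Theorem~\ref{t2} bookkeeping cannot be invoked verbatim at $\theta=1$.
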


(see \cite{18} for more details).

If $J_p=0$, then Hamiltonian (\ref{2}) of Potts-SOS model changes
to Hamiltonian of SOS model. In this case Theorem
\ref{t2} can be restated as follows
\begin{thm} \label{t4} Let $k=2, m=2$. The following statements are appropriate for the number $n$ of the TISGMs for the
SOS model
\begin{equation}\label{*t4}
 n=\left\{ \begin{array}{ll}
 1, \, \mbox{if} \, \ \theta\in (\theta_{2},\infty),\\[2mm]
 3,\,  \mbox{if} \, \ \theta=\theta_2,\\[2mm]
 5, \, \mbox{if} \, \ \theta\in (\theta_{1}, \theta_{2}),\\[2mm]
 6, \, \mbox{if} \, \ \theta= \theta_{1},\\[2mm]
 7, \, \mbox{if} \, \ \theta\in (0, \theta_1),\\
 \end{array} \right.\end{equation}
where $\theta_1\approx 0.1414$ and $\theta_2\approx 0.2956$.
\end{thm}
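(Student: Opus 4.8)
The plan is to read off Theorem \ref{t4} from Theorem \ref{t2} by restricting to the line $\{r=1\}$ of the $(r,\theta)$-plane: since $J_p=0$ makes $r=e^{J_p\beta}=1$ in (\ref{7}), I set $r=1$ throughout (\ref{13})--(\ref{16}). Writing $n_A$ for the number of positive roots produced by the branch $x=1$ of (\ref{15}) and $n_B$ for those produced by (\ref{16}), we have $n=n_A+n_B$, and the whole statement reduces to tracking how $n_A$ and $n_B$ change as $\theta$ decreases.

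For branch A, equation (\ref{17}) at $r=1$ is $\theta y^3-y^2+(\theta^2+1)y-2\theta=0$; its coefficients strictly alternate in sign, so by Descartes all real roots are positive and their number is $1$ or $3$, governed by the sign of $Q(1,\theta)$ exactly as in the proof of Theorem \ref{t2}. Putting $r=1$ in (\ref{*21}) writes $Q(1,\theta)$ as $(108\theta^4)^{-1}$ times $P(\theta)=4\theta^7+12\theta^5+71\theta^4+12\theta^3-38\theta^2+12\theta-1$; a sign analysis of $P$ (Descartes together with evaluation at a few sample points) isolates a unique positive zero $\theta_1\approx 0.1414$, with $P<0$ on $(0,\theta_1)$ and $P>0$ on $(\theta_1,\infty)$. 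Hence $n_A=3$ for $\theta<\theta_1$, $n_A=2$ at $\theta_1$, and $n_A=1$ for $\theta>\theta_1$.

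For branch B, the discriminant step in the proof of Theorem \ref{t2} shows (\ref{16}) has positive solutions only when $r>3\theta^2$, i.e. $\theta<1/\sqrt3$, so $n_B=0$ for $\theta\ge 1/\sqrt3$. For $\theta<1/\sqrt3$ the quartic (\ref{23}) at $r=1$ simplifies, after dividing by $(1-\theta^2)^2$, to $\theta^2y^4-2\theta y^3+(\theta^3+\theta+1)y^2-(1+\theta^2)y+\theta=0$. The decisive input is the spin-reflection symmetry $\sigma\mapsto 2-\sigma$ of the SOS Hamiltonian: a direct computation (combine the two equations of (\ref{12}) at $r=1$, e.g. as $x\cdot(\mathrm{eq}_2)-y\cdot(\mathrm{eq}_1)$) shows that $(x,y)\mapsto(1/x,\,y/x)$ sends solutions of (\ref{12}) to solutions, is an involution, and has fixed locus exactly $\{x=1\}$, namely branch A. Thus off branch A the solutions occur in genuine pairs, so $n_B$ is even and on the line $r=1$ one never meets the regions $B_2$ or $B_4$ of Lemma \ref{l2}; only $B_5,B_3,B_1$ (giving $n_B\in\{0,2,4\}$) can occur. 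More sharply, a double root $y_0$ of the branch-B quartic is carried by the involution to a \emph{second} double root $y_0/x_0\neq y_0$ (the inequality holds because $x_0\neq1$ off branch A), and the two sit in different quadratic factors of (\ref{23}); hence $D_1(1,\theta)$ and $D_2(1,\theta)$ vanish at one and the same value $\theta_2$. Evaluating the discriminants then shows both are negative (region $B_5$, $n_B=0$) on $(\theta_2,1/\sqrt3)$ and both positive (region $B_1$, $n_B=4$) on $(0,\theta_2)$, with $n_B=2$ at $\theta_2\approx 0.2956$.

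Finally, because $\theta_1<\theta_2<1/\sqrt3$, summing $n=n_A+n_B$ yields the five cases: $n=1$ on $(\theta_2,\infty)$ (where $n_A=1$, $n_B=0$), $n=3$ at $\theta_2$, $n=5$ on $(\theta_1,\theta_2)$, $n=6$ at $\theta_1$ (where $n_A$ rises to $2$ while $n_B=4$), and $n=7$ on $(0,\theta_1)$. I expect the main obstacle to be precisely the branch-B count: a priori $n_B$ could climb $0\to2\to4$ through an \emph{interval} on which $n=3$, which would destroy the clean statement. The involution is what rules this out --- it forces the two double roots, hence the two discriminants, to appear simultaneously at a single $\theta_2$ --- so the remaining work is computational: verify that crossing $\theta_2$ lands in $B_1$ (all four roots real and, by Descartes/Vieta on the simplified quartic, positive) rather than back in $B_5$, and pin down the numerical values $\theta_1\approx0.1414$ and $\theta_2\approx0.2956$ together with the inequalities $\theta_1<\theta_2<1/\sqrt3$.
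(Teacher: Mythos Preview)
Your approach is sound and in fact more complete than what the paper does: the paper offers no proof of Theorem~\ref{t4} at all, merely stating that it is the $J_p=0$ restatement of Theorem~\ref{t2} and deferring to \cite{27} for details. You carry out that specialization explicitly, and your use of the spin-reflection involution $(x,y)\mapsto(1/x,\,y/x)$ is a genuine addition --- neither the present paper nor the bare statement of Theorem~\ref{t2} explains \emph{why} the branch-B count must jump $0\to2\to4$ at a single value of~$\theta$ rather than in two stages, and your symmetry argument gives a clean conceptual reason (double roots of the branch-B quartic are forced to occur in pairs because the involution has no fixed points off $\{x=1\}$). That observation is exactly what reconciles the five-case list of Theorem~\ref{t4} with the seven-case list of Theorem~\ref{t2}.

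Two small points to tighten. First, Descartes' rule gives three sign changes in your polynomial $P(\theta)=4\theta^7+12\theta^5+71\theta^4+12\theta^3-38\theta^2+12\theta-1$, so sampling alone does not prove the positive root is unique; you need an additional monotonicity or Sturm-type check (it is easy, but should be stated). Second, your claim that the two simultaneous double roots land in \emph{different} quadratic factors of the factorization in Lemma~\ref{l2} presupposes a particular choice of factorization that the paper never pins down; the robust statement is simply that at $\theta_2$ the quartic has two distinct double roots and hence $n_B=2$, placing $(1,\theta_2)$ in $B_3$ regardless of how one splits the quartic. With those two adjustments the argument is complete.
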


(see \cite{27} for more details).

Now we study the extremality of the TISGMs for the Potts-SOS model. In general, a complete analyses of extremality or non-extremality of the TISGMs is a difficult problem. Therefore, we assume $r=\theta^2$.

\begin{lemma}\label{l3} Let $r=\theta^2$. There exists a unique $\theta_c(\approx7.729814)$ such that\\
$\bullet$ If $\theta \in (0,\theta_c)$ then system (\ref{12}) has one positive root.\\
$\bullet$ If $\theta=\theta_c$ then system (\ref{12}) has two positive roots.\\
$\bullet$ If $\theta\in (\theta_c, \infty)$ then system (\ref{12}) has three positive roots.\\
\end{lemma}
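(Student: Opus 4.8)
The plan is to exploit the special structure produced by $r=\theta^2$, which collapses the two‑dimensional system (\ref{12}) into a single cubic in $y$. First I would substitute $r=\theta^2$ into the first equation of (\ref{12}): its numerator $rx^2+\theta y^2+\theta^2$ and its denominator $\theta^2x^2+\theta y^2+r$ become literally identical, so the first equation reduces to $x=1$ for every positive triple $(x,y,\theta)$. Hence every positive solution of the system necessarily has $x=1$, and substituting $x=1$ into the second equation (equivalently, setting $r=\theta^2$ in (\ref{17})) yields the cubic $g(y):=y^3-\theta y^2+2\theta y-2=0$. Counting positive solutions of (\ref{12}) therefore reduces to counting positive roots of $g$.

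Next I would fix the sign of the real roots of $g$ by Descartes' rule of signs. Since $g(0)=-2\ne 0$, zero is not a root; the coefficient sequence $(1,-\theta,2\theta,-2)$ has three sign changes, while $g(-y)$ has coefficients $(-1,-\theta,-2\theta,-2)$ with no sign change, so $g$ has no negative root. Consequently every real root of $g$ is strictly positive, and the number of positive roots equals the number of real roots. This is the key simplification: it replaces a delicate positivity analysis by the classical real‑root count governed by the cubic discriminant.

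I would then compute the discriminant of $g$, obtaining $\Delta=4\big(\theta^4-10\theta^3+18\theta^2-27\big)=:4P(\theta)$, and study the quartic $P$. From $P'(\theta)=2\theta(2\theta^2-15\theta+18)$ the interior critical points are $\theta=\tfrac32$ and $\theta=6$; evaluating $P$ there shows it rises to a negative local maximum $P(\tfrac32)=-\tfrac{243}{16}$, falls to a negative local minimum $P(6)=-243$, and then increases monotonically to $+\infty$. Hence $P$ has exactly one positive zero $\theta_c\in(6,\infty)$, with $P<0$ on $(0,\theta_c)$ and $P>0$ on $(\theta_c,\infty)$; solving $P(\theta)=0$ numerically gives $\theta_c\approx 7.729814$.

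Finally I would translate the sign of $\Delta$ into the root count. For $\theta\in(0,\theta_c)$ we have $\Delta<0$, so $g$ has a single real root, which is positive by the Descartes step—one positive solution. For $\theta>\theta_c$ we have $\Delta>0$, so $g$ has three distinct real roots, all positive—three positive solutions. At $\theta=\theta_c$, $\Delta=0$ yields a repeated root; since $g$ can never have a triple root (matching $y^3-\theta y^2+2\theta y-2$ to $(y-\alpha)^3$ would force simultaneously $\alpha=\theta/3$, $\theta^2/3=2\theta$ and $\alpha^3=2$, which are mutually inconsistent), the repeated root is a double root together with one simple root, both positive—two positive solutions. I expect the only genuinely delicate step to be the analysis of $P$: establishing that both interior critical values stay negative so that $P$ has exactly one positive zero, and then certifying the numerical value $\theta_c\approx 7.729814$.
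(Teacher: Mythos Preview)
Your proof is correct and follows the same skeleton as the paper: reduce (\ref{12}) under $r=\theta^2$ to $x=1$ and the cubic $y^3-\theta y^2+2\theta y-2=0$, compute the discriminant $4(\theta^4-10\theta^3+18\theta^2-27)$, and invoke Descartes' rule for positivity. Your version is in fact more complete than the paper's, which simply asserts the existence of $\theta_c$ without justifying uniqueness; your analysis of the critical points of $P(\theta)$ at $\theta=\tfrac32$ and $\theta=6$ and your exclusion of the triple-root scenario at $\theta_c$ fill gaps the paper leaves open.
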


\begin{proof}

Substituting $r=\theta^2$ into (\ref{12}) we have 					
\begin{equation} \label{33}
\left\{%
\begin{array}{ll}
    x=1,\\[3mm]
    y=\frac{2+\theta y^2}{2 \theta+y^2}. \\[3mm]
   \end{array}%
\right.
\end{equation}

Simplifying the second equation of (\ref{33}),  we obtain the cubic equation
 \begin{equation} \label{34}
y^3-\theta y^2+2\theta y-2=0.
\end{equation}
		
We calculate its discriminant
 \begin{equation} \label{35}
D=4(\theta^4-10\theta^3+18\theta^2-27).
\end{equation}
Denote $\theta_c\approx 7.729814$. If $D<0$ ($\theta<\theta_c$) the equation (\ref{34}) has one real and two conjugate complex roots. If  $D=0$ ($\theta = \theta_c$) then all roots of equation (\ref{34}) are real, which two of them are equal. If $D>0$ ($\theta>\theta_c$) then the equation  (\ref{34}) has three distinct real roots (see Fig. \ref{y1y2y3grafik}). Obtained real roots are positive due to the Descartes theorem (see \cite{24}).
\end{proof}

\begin{figure}[h!]
    \begin{center}
        \includegraphics[width=10cm]{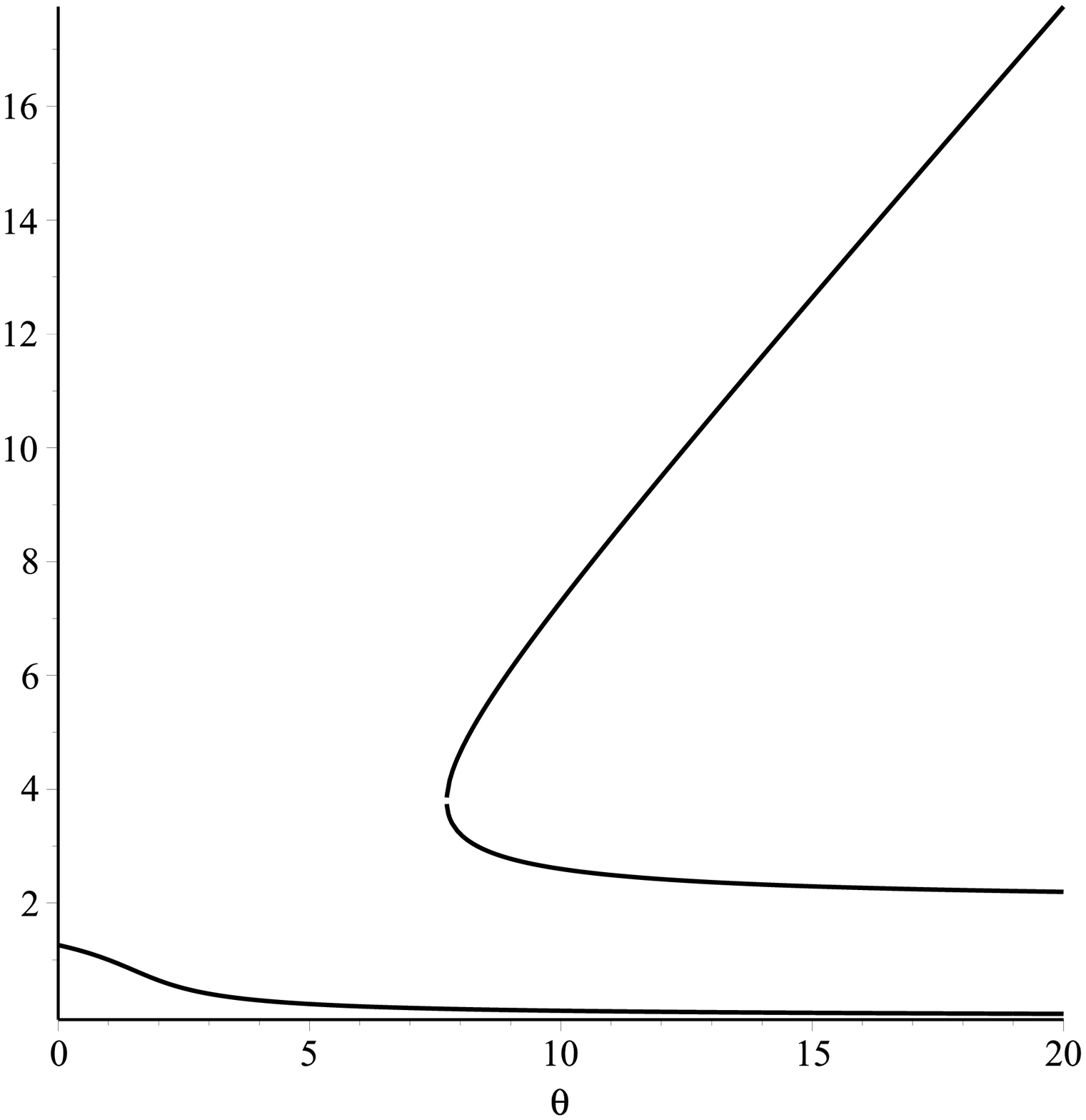}
    \end{center}
    \caption{The graphs of functions $y_i=y_i(\theta), i=1,2,3.$} Lower curve is $y_1$, middle curve is $y_2$, upper curve is $y_3$  \label{y1y2y3grafik}
\end{figure}

Using Lemma \ref{l3}, we have the following

\begin{thm}\label{t5} Let $k=m=2$. If $r=\theta^2$ then the following statements hold for the $N$
 \begin{equation}\label{36}
 N=\left\{ \begin{array}{ll}
    1, \   \ \mbox{if}  \ \theta \in (0, \theta_c),\\[3mm]
    2, \   \ \mbox{if}  \ \theta=\theta_c,\\[3mm]
    3, \   \ \mbox{if}  \ \theta\in (\theta_c, \infty),\\[3mm]
   \end{array} \right.\end{equation}
   where $\theta_c\approx7.729814$.
\end{thm}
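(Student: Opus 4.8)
The plan is to reduce the counting of TISGMs under the constraint $r=\theta^2$ to counting distinct positive roots of the cubic (\ref{34}), and then to read off the answer directly from Lemma \ref{l3}. By the one-to-one correspondence recorded after (\ref{8}), the number $N$ of TISGMs equals the number of distinct positive solutions $(x,y)$ of the system (\ref{12}); so it suffices to enumerate these solutions when $r=\theta^2$.

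First I would substitute $r=\theta^2$ into the decomposed system (\ref{14}). In the first factor appearing in (\ref{16}) the term $-rx$ cancels against $\theta^2 x$, leaving $\theta^2 x^2+\theta^2+\theta y^2$, which is strictly positive for all $x,y,\theta>0$; hence the branch (\ref{16}) contributes no positive solution. Thus every positive solution must lie on the branch (\ref{15}), i.e. has $x=1$, and the second equation of (\ref{15}) collapses (after dividing by $\theta$) precisely to the cubic (\ref{34}). This is exactly the reduction (\ref{33})--(\ref{34}) already used to establish Lemma \ref{l3}, so the positive solutions of (\ref{12}) are in bijection with the positive roots $y$ of (\ref{34}), each paired with $x=1$.

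It then remains to transport the root count of Lemma \ref{l3} through this bijection, and this is governed by the sign of the discriminant $D$ in (\ref{35}). When $D<0$ (that is $\theta\in(0,\theta_c)$) there is a single positive root and hence $N=1$; when $D>0$ (that is $\theta\in(\theta_c,\infty)$) there are three distinct positive roots and hence $N=3$. In each of these two regimes the correspondence between roots and measures is one-to-one, so no further bookkeeping is needed.

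The one delicate point, which I would treat as the main obstacle, is the boundary case $\theta=\theta_c$ where $D=0$: here I must confirm that (\ref{34}) has a double root together with a \emph{distinct} simple root, giving two distinct values of $y$ and hence $N=2$, rather than a triple root (which would collapse to $N=1$). To exclude a triple root I would impose $(y-a)^3=y^3-\theta y^2+2\theta y-2$, which forces simultaneously $3a=\theta$, $3a^2=2\theta$ and $a^3=2$; the first two equations give $\theta=6$ and $a=2$, contradicting $a^3=2$. Hence no value of $\theta$ yields a triple root, so at $\theta=\theta_c$ the cubic (\ref{34}) has exactly two distinct positive roots and $N=2$, completing the case analysis.
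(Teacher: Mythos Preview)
Your proof is correct and follows essentially the same route as the paper: both reduce the problem to Lemma~\ref{l3}, i.e., to counting positive roots of the cubic (\ref{34}), and then read off $N$ from the sign of the discriminant (\ref{35}). The only cosmetic difference is that the paper obtains $x=1$ by substituting $r=\theta^2$ directly into the first equation of (\ref{12}) (the numerator and denominator coincide), whereas you reach the same conclusion via the factorization (\ref{14})--(\ref{16}); your explicit exclusion of a triple root at $\theta=\theta_c$ is a welcome detail that the paper leaves implicit.
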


\begin{rk}
Note that the Theorem \ref{t5} is a particular case of the Theorem \ref{t2} .
\end{rk}

We denote obtained TISGMs corresponding to $y_i$ in the Theorem \ref{t5} by $\mu_i, i=1, 2, 3$, respectively.

\section{Tree-indexed Markov Chains of TISGMs}
A tree-indexed Markov chain is defined as follows. Suppose we are given with vertices set $V$, a probability measure $\nu$ and a transition matrix $P=(p_{i,j})_{i,j\in \Phi}$ on the single-site space which is here the finite set $\Phi=\{0, 1, ..., m\}$. We can obtain a tree-indexed Markov chain $X: V\rightarrow\Phi$ by choosing $X(x_0)$ according to $\nu$ and choosing $X(v)$, for each vertex $v\neq x^0$, using the transition probabilities given the value of its parent, independently of everything else. See Definition 12.2 in \cite{4} for a detailed definition.

We note that a TISGM corresponding to a vector $v=(x,y)\in R^2$ (which is solution to the system (\ref{12})) is a tree-indexed Markov chain with states $\{0, 1, 2\}$ and transition probabilities matrix:
\begin{equation}\label{51}
 P=\left( \begin{array}{ll}
 \frac{rx^2}{rx^2+\theta y^2+\theta^2} \ \ \frac{\theta y^2}{rx^2+\theta y^2+\theta^2} \ \ \frac{\theta^2}{rx^2+\theta y^2+\theta^2}\\[2mm]
 \frac{\theta x^2}{\theta x^2+ry^2+\theta} \ \ \frac{ry^2}{\theta x^2+ry^2+\theta} \ \ \frac{\theta}{\theta x^2+ry^2+\theta}\\[2mm]
 \frac{\theta^2 x^2}{\theta^2 x^2+\theta y^2+r} \ \ \frac{\theta y^2}{\theta^2 x^2+\theta y^2+r} \ \ \frac{r}{\theta^2 x^2+\theta y^2+r}\\
 \end{array} \right).\end{equation}

Since $(x,y)$ is a solution to the system (\ref{12}) this matrix can be written in the following form
\begin{equation}\label{52}
 P=\frac{1}{Z}\left( \begin{array}{ll}
 rx \ \ \ \frac{\theta y^2}{x} \ \ \ \frac{\theta^2}{x}\\[2mm]
 \frac{\theta x^2}{y} \ \ \ ry \ \ \ \frac{\theta}{y}\\[2mm]
 \theta^2x^2 \ \ \theta y^2 \ \ r\\
 \end{array} \right),\end{equation}
 where $Z=\theta^2x^2+\theta y^2+r$.\\

 Simple calculations show that the matrix (\ref{52}) has three eigenvalues: $1$ and
\begin{equation}\label{53}
 \lambda_{1}(x,y,\theta,r)=\frac{(x+y+1)r-Z+\sqrt{D^*}}{2Z},
 \lambda_{2}(x,y,\theta,r)=\frac{(x+y+1)r-Z-\sqrt{D^*}}{2Z},
\end{equation}
where $\lambda_1$ and $\lambda_2$ are solutions to
\begin{equation}\label{53}
Z^3\lambda^2+(Z-(1+x+y)r)Z^2\lambda+(2\theta^4-\theta^4r-2\theta^2r+r^3)xy=0
\end{equation}
and $D^*=((1+x+y)r-Z)^2-4xyZ^{-1}(2\theta^4-\theta^4r-2\theta^2r+r^3).$
\subsection{Conditions of Non-Extremality}

In this subsection we are going to find the regions of the parameter $\theta$ where the TISGMs $\mu_i, i=1,2,3$ are not extreme in the set of all Gibbs measures (including the non-translation invariant ones).\\
It is known that a sufficient condition (Kesten-Stigum condition) for non-extremality of a Gibbs measure $\mu$ corresponding to the matrix $P$ on a Cayley tree of order $k\geq1$ is that $k\lambda_{\max}^{2}>1$, where $\lambda_{\max}$ is the second largest (in absolute value) eigenvalue of $P$ \cite{28}. We are going to use this condition for TISGMs $\mu_i, i=1,2,3$ in Theorem \ref{t5}. We have all solutions of the system (\ref{12}) in condition $r=\theta^2$ (see Theorem \ref{t5}) and the eigenvalues of the matrix $P$ in the explicit form.\\

Let us denote
$$\lambda_{\max,i}(\theta,r)=\max\{|\lambda_1(x_i,y_i,\theta,r)|, |\lambda_2(x_i,y_i,\theta,r)|\},  i=1,2,3.$$
Using a computer we have
$$ \lambda_{\max,i}(\theta)=\left\{ \begin{array}{ll}
 |\lambda_2(1,y_1,\theta)|, \, \mbox{if} \, \ i=1, \theta<1,\\[2mm]
 |\lambda_1(1,y_1,\theta)|, \, \mbox{if} \, \ i=1, \theta>1,\\[2mm]
 |\lambda_1(1,y_i,\theta)|, \, \mbox{if} \, \ i=2,3.
 \end{array} \right.$$

Denote
$$ \eta_i(\theta)=2\lambda_{\max,i}^2(\theta)-1, i=1,2,3.$$

Let $\theta<\theta_c$. Using the Cardano formula, we solve the equation (\ref{34}). It has one real solution
\begin{equation}
y_1=\frac{1}{3}\left(\theta+\sqrt[3]{\theta^3-9\theta^2+27+1.5\sqrt{-3D}}+\frac{\theta^2-6\theta}{\sqrt[3]{\theta^3-9\theta^2+27+1.5\sqrt{-3D}}}\right),
\end{equation}
where $D$ is defined in (\ref{35}).
In this case, we are aiming to check the Kesten-Stigum condition of the non-extremality of the measure $\mu_1$. To determine the non-extremality interval of  TISGM $\mu_1$, we should check the condition
$$2\lambda^2_{max,1}-1>0.$$

Using a \emph{Maple} program, one can see that the last inequality holds for $\theta\in (0, \theta_1) (\theta_1\approx0.1666993311)$, which implies that the TISGM $\mu_1$ is not-extreme in this interval (see Fig. \ref{eta1}).

\begin{figure}[h!]
    \begin{center}
        \includegraphics[width=14cm]{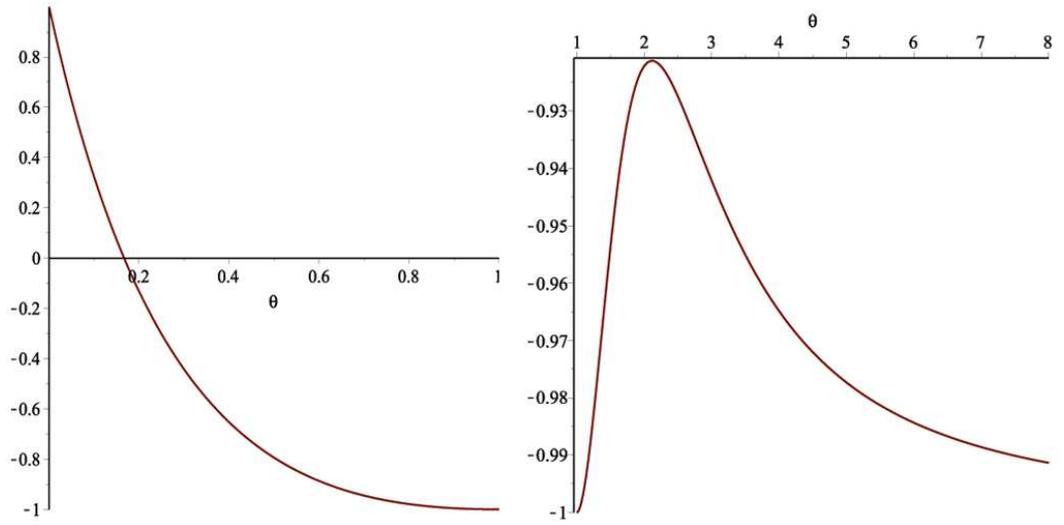}
    \end{center}
    \caption{The graphs of functions $\eta_1(\theta)$ for $\theta\in(0,1)$(left) and for $\theta\in(1,\infty)$(right)} \label{eta1}
\end{figure}
To check that the TISGM $\mu_i, i=2,3 $ are non-extreme, we should solve the following inequality:
$\eta_i(\theta)>0, i=2,3.$ (see Fig. \ref{eta23}).
\begin{figure}[h!]
    \begin{center}
        \includegraphics[width=14cm]{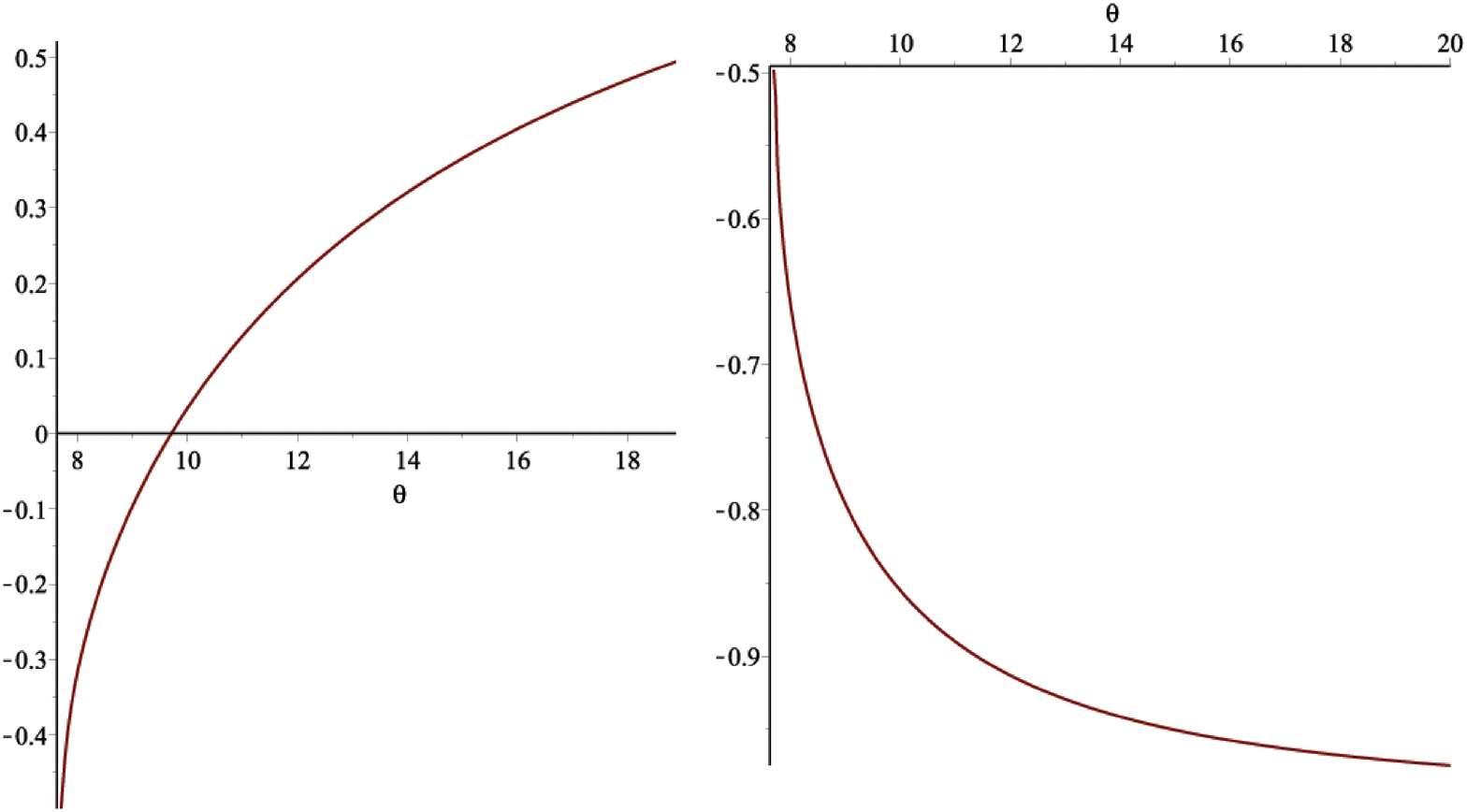}
    \end{center}
    \caption{The graphs of functions $\eta_2(\theta)$ (left) and $\eta_3(\theta)$ (right)} \label{eta23}
\end{figure}

\begin{pro}\label{p1} Let $r=\theta^2$. Then the following statements hold\\
a) There exists $\theta_1(\approx0.1666993311)$ such that the measure $\mu_1$ is non-extreme if $\theta\in (0, \theta_1)$;\\
b) There exists $\theta_2(\approx9.706301628)$ such that the measure $\mu_2$ is non-extreme if $\theta\in(\theta_2, \infty)$.\\
\end{pro}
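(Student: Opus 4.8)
The plan is to make the Kesten-Stigum criterion $2\lambda_{\max,i}^2>1$ completely explicit in the regime $r=\theta^2$ and then to locate the threshold at which $\eta_i(\theta)=2\lambda_{\max,i}^2(\theta)-1$ changes sign. The essential simplification is that, upon setting $r=\theta^2$, the coefficient $2\theta^4-\theta^4r-2\theta^2r+r^3$ appearing in the characteristic equation (\ref{53}) vanishes identically. Hence the quadratic in $\lambda$ degenerates to $Z^3\lambda^2+(Z-(1+x+y)r)Z^2\lambda=0$, so one of $\lambda_1,\lambda_2$ is zero and the other equals $\mu_i:=\frac{(1+x_i+y_i)r-Z}{Z}$. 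Since every solution from Theorem \ref{t5} has $x_i=1$ and $Z=\theta(2\theta+y_i^2)$, this nonzero eigenvalue reduces to the remarkably clean expression
\begin{equation*}
\mu_i=\frac{y_i(\theta-y_i)}{2\theta+y_i^2},\qquad \lambda_{\max,i}=|\mu_i|.
\end{equation*}
This already explains the reported switch of $\lambda_{\max,1}$ between $|\lambda_2|$ and $|\lambda_1|$ at $\theta=1$: substituting $y=\theta$ into the cubic (\ref{34}) gives $2\theta^2-2=0$, so $\mu_1$ vanishes exactly at $\theta=1$.

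With this reduction the Kesten-Stigum condition becomes the single polynomial inequality
\begin{equation*}
\eta_i(\theta)>0 \iff 2y_i^2(\theta-y_i)^2>(2\theta+y_i^2)^2,
\end{equation*}
to be combined with the constraint $y_i^3-\theta y_i^2+2\theta y_i-2=0$ from (\ref{34}). For part (a) I would insert the explicit Cardano root $y_1(\theta)$ recorded before the proposition and study $\eta_1$ on $(0,\infty)$. The boundary analysis is favorable: as $\theta\to0^+$ one has $y_1\to 2^{1/3}$, whence $\mu_1\to-1$ and $\eta_1\to1>0$, so the inequality holds for small $\theta$. For part (b), $\mu_2$ exists only for $\theta>\theta_c$, and as $\theta\to\infty$ the middle root satisfies $y_2\to 2$, giving $\mu_2\to1$ and $\eta_2\to1>0$, so the inequality holds for large $\theta$. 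It therefore remains to show that each $\eta_i$ crosses zero exactly once, at $\theta_1\approx0.1666993311$ and $\theta_2\approx9.706301628$, respectively.

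The main obstacle is precisely this last monotonicity/uniqueness step, since $y_i$ is defined only implicitly by the cubic and $\eta_i$ is a composite of radicals. I would handle it by eliminating the radical: using $\theta=\frac{y^3-2}{y^2-2y}$ from (\ref{34}) to parametrize each branch by $y$, the inequality $2y^2(\theta-y)^2>(2\theta+y^2)^2$ becomes, after clearing the positive denominators, a polynomial inequality in the single variable $y$ over the $y$-range corresponding to the relevant branch (namely $y\in(0,2^{1/3})$ for $\mu_1$ and $y\in(2,y_c)$ for $\mu_2$, where $y_c$ is the double root at $\theta_c$). Determining that the resulting polynomial has a unique admissible root on each branch, and evaluating the corresponding $\theta$-values, is the computation delegated to \emph{Maple} in the statement; a rigorous version would establish the monotonicity of the reduced one-variable expression via its derivative, which together with the two boundary limits already computed forces a single sign change and hence the intervals $(0,\theta_1)$ and $(\theta_2,\infty)$.
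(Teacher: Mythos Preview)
Your approach is correct and is genuinely more analytic than the paper's. The paper does not observe the simplification you found: it simply carries the full formulas (\ref{53}) for $\lambda_{1},\lambda_{2}$, plugs in the Cardano expression for $y_1$, and then relies on \emph{Maple} and on the plots of $\eta_i(\theta)$ (Figures~\ref{eta1} and~\ref{eta23}) to read off the sign-change points $\theta_1$ and $\theta_2$. Your key observation that $2\theta^4-\theta^4r-2\theta^2r+r^3$ vanishes identically at $r=\theta^2$, forcing one eigenvalue to be $0$ and the other to equal $\dfrac{y_i(\theta-y_i)}{2\theta+y_i^2}$, is exactly right and is not present in the paper; it also neatly explains the otherwise ad~hoc switch of $\lambda_{\max,1}$ between $|\lambda_2|$ and $|\lambda_1|$ at $\theta=1$ that the paper only reports from computer output. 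Your boundary computations ($y_1\to 2^{1/3}$, $\mu_1\to-1$ as $\theta\to0$; $y_2\to 2$, $\mu_2\to 1$ as $\theta\to\infty$) are correct and already give the qualitative conclusion, and your proposed elimination via $\theta=\dfrac{y^3-2}{y^2-2y}$ to reduce the uniqueness of the crossing to a one-variable polynomial problem is a sound way to make the final step rigorous. In short, the paper's argument is purely numerical/graphical, whereas yours supplies an explicit eigenvalue formula and an analytic framework; the only part you still defer to computation is the same monotonicity step the paper never proves either.
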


\subsection{Conditions for Extremality}
In \cite{27}, \cite{29} the key ingredients are two quantities, $\kappa$ and $\gamma$, which bound the rates of percolation of disagreement down and up the tree, respectively.

For two measures $\mu_1$ and $\mu_2$ on $\Omega$, $\|\mu_1-\mu_2\|_x$ denotes the variation distance between the projections of $\mu_1$ and $\mu_2$ onto the spin at $x$, i.e.,
$$\|\mu_1-\mu_2\|_x=\frac{1}{2}\sum_{i=0}^2|\mu_1(\sigma(x)=i)-\mu_2(\sigma(x)=i)|.$$
Let $\eta^{x,s}$ be the configuration $\eta$ with the spin at $x$ set to $s$.
Following \cite{27}, \cite{29} define
$$\kappa\equiv\kappa(\mu)=\sup_{x\in\Gamma^k}\max_{x,s,s^\prime}\|\mu_{\tau_x}^s-\mu_{\tau_x}^{s^\prime}\|_x;$$
$$\gamma\equiv\gamma(\mu)=\sup_{A\subset\Gamma^k}\max\|\mu_{A}^{\eta^{y,s}}-\mu_{A}^{\eta^{y,s^\prime}}\|_x,$$
where the maximum is taken over all boundary conditions $\eta$, all sites $y\in\partial A$, all neighbors $x\in A$ of $y$, and all spins $s, s^\prime\in\{0, 1, 2\}$.

The criterion of extremality of a TISGM is $k\kappa\gamma<1$\cite{27}, \cite{29}.
Note that $\kappa$ has the particularly simple form
$\kappa=\frac{1}{2}\max_{i,j}\sum_{l}|P_{i,l}-P_{j,l}|$
and $\gamma$ is a constant which does not have a clear general formula.

Let $r=\theta^2$. For the solution $(1, y)$, we shall compute $\kappa$
 \begin{equation}
 \kappa=\frac{2\cdot|1-\theta y|+y^2\cdot|\theta-y|}{2y(2\theta+y^2)}.
\end{equation}

For $\theta<1$ from the system (\ref{12}) we get the following inequalities
$$1-\theta y=\frac{\theta(1-\theta^2)y^2}{Z}>0, y-\theta=\frac{2\theta(1-\theta^2)}{Z}>0.$$
Using these inequalities, we obtain

$$ \kappa=\left\{ \begin{array}{ll}
 \frac{y^3-\theta y^2-2\theta y+2}{2y(2\theta+y^2)}, \, \mbox{if} \, \  0<\theta<1,\\[2mm]
 \frac{-y^3+\theta y^2+2\theta y-2}{2y(2\theta+y^2)}, \, \mbox{if} \, \  \theta\geq1.\\[2mm]
 \end{array} \right.$$

 For the solution $(1, y)$, we shall calculate $\gamma$.
 $$\gamma=\max\left\{\|\mu_A^{\eta^{y,0}}-\mu_A^{\eta^{y,1}}\|_x, \|\mu_A^{\eta^{y,0}}-\mu_A^{\eta^{y,2}}\|_x, \|\mu_A^{\eta^{y,1}}-\mu_A^{\eta^{y,2}}\|_x\right\},$$
 where
 $$\|\mu_A^{\eta^{y,0}}-\mu_A^{\eta^{y,1}}\|_x=\frac{1}{2}\sum_{s\in\{0, 1, 2\}}|\mu_A^{\eta^{y,0}}(\sigma(x)=s)-\mu_A^{\eta^{y,1}}(\sigma(x)=s)|=$$
 $$=\frac{1}{2}\left(|P_{0,0}-P_{1,0}|+|P_{0,1}-P_{1,1}|+|P_{0,2}-P_{1,2}|\right)=$$
 $$=\left\{ \begin{array}{ll}
 \frac{y^3-\theta y^2-2\theta y+2}{2y(2\theta+y^2)}, \, \mbox{if} \, \  0<\theta<1,\\[2mm]
 \frac{-y^3+\theta y^2+2\theta y-2}{2y(2\theta+y^2)}, \, \mbox{if} \, \  \theta\geq1,\\[2mm]
 \end{array} \right.$$

$$\|\mu_A^{\eta^{y,0}}-\mu_A^{\eta^{y,2}}\|_x=\frac{1}{2}\sum_{l\in\{0, 1, 2\}} |P_{0,l}-P_{2,l}|=0,$$

$$\|\mu_A^{\eta^{y,1}}-\mu_A^{\eta^{y,2}}\|_x=\frac{1}{2}\sum_{l\in\{0, 1, 2\}} |P_{1,l}-P_{2,l}|=$$
 $$=\left\{ \begin{array}{ll}
 \frac{y^3-\theta y^2-2\theta y+2}{2y(2\theta+y^2)}, \, \mbox{if} \, \  0<\theta<1,\\[2mm]
 \frac{-y^3+\theta y^2+2\theta y-2}{2y(2\theta+y^2)}, \, \mbox{if} \, \  \theta\geq1.\\[2mm]
 \end{array} \right.$$
 Hence, when $0<\theta<1$
$$\gamma=\max\left\{0, \frac{y^3-\theta y^2-2\theta y+2}{2y(2\theta+y^2)}\right\}=\frac{y^3-\theta y^2-2\theta y+2}{2y(2\theta+y^2)},$$
and when $\theta\geq1$
$$\gamma=\max\left\{0, \frac{-y^3+\theta y^2+2\theta y-2}{2y(2\theta+y^2)}\right\}=\frac{-y^3+\theta y^2+2\theta y-2}{2y(2\theta+y^2)}.$$

Now for TISGMs $\mu_i, i=1,2,3$ we want to check the extremality condition $2\kappa\gamma<1$. When $\theta>0$ this condition has the form
$$2\kappa\gamma-1=2\left(\frac{y_i^3-\theta y_i^2-2\theta y_i+2}{2y_i(2\theta+y_i^2)}\right)^2-1<0.$$
We check this condition for the TISGM $\mu_2$.
Denote
$$U_2(\theta)=\frac{(y_2^3-\theta y_2^2-2\theta y_2+2)^2}{2y_2^2(2\theta+y_2^2)^2}-1.$$
The function $U_2(\theta)$ only depends on $\theta$ and has no additional parameters. From its graph one can see the region of $\theta$ where the function is negative. Thus looking on the graph of $U_2(\theta)$ (see Fig. \ref{U2}) completes the arguments.
\begin{figure}[h!]
    \begin{center}
        \includegraphics[width=8cm]{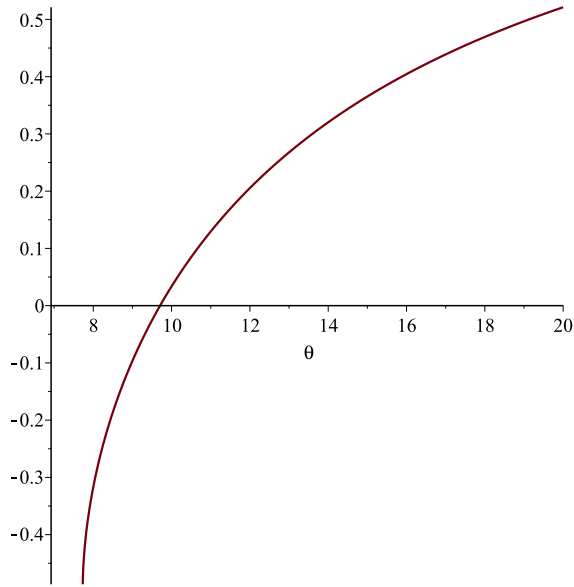}
    \end{center}
    \caption{The graph of function $U_2(\theta)$} \label{U2}
\end{figure}

We check extremality of TISGMs $\mu_1, \mu_3$. Thus consider the following functions
$$U_1(\theta)=\frac{(y_1^3-\theta y_1^2-2\theta y_1+2)^2}{2y_1^2(2\theta+y_1^2)^2}-1,$$
$$U_3(\theta)=\frac{(y_3^3-\theta y_3^2-2\theta y_3+2)^2}{2y_3^2(2\theta+y_3^2)^2}-1.$$
The extremality interval of TISGMs $\mu_1, \mu_3$ are seen from Fig.\ref{U1U3}.

\begin{figure}[h!]
    \begin{center}
        \includegraphics[width=14cm]{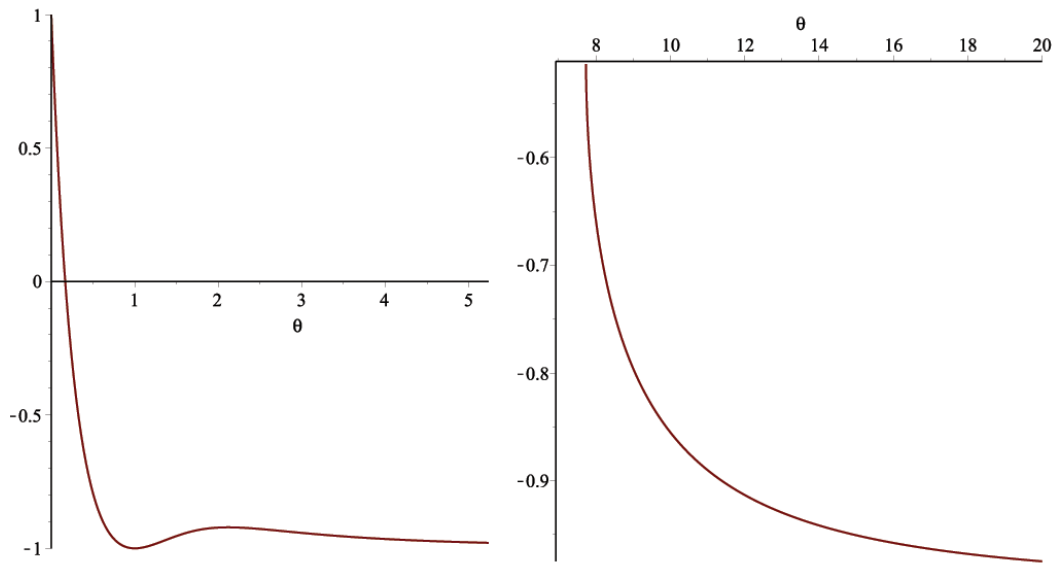}
    \end{center}
    \caption{The graphs of functions $U_1(\theta)$(left) and $U_3(\theta)$(right)} \label{U1U3}
\end{figure}

\begin{pro}\label{p2} Let $r=\theta^2$. Then the following statements hold\\
a) There exists $\theta_1(\approx0.1666993311)$ such that the measure $\mu_1$ is extreme if $\theta\in(\theta_1, \infty)$;\\
b) There are values $\theta^*(\approx7.729813675)$ and $\theta_2(\approx9.706301628)$ such that the measure $\mu_2$ is extreme if  $\theta\in[\theta^*, \theta_2)$;\\
c) The measure $\mu_3$ is extreme (where it exists, that is $\theta\in[\theta^*, \infty)$).\\
\end{pro}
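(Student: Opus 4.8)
The plan is to collapse the two-parameter extremality criterion $k\kappa\gamma<1$ (here $k=2$) into a single scalar inequality and then analyse it separately along each root-branch $y_i=y_i(\theta)$ of the cubic (\ref{34}). The key observation, already apparent in the computations immediately preceding the statement, is that for every solution $(1,y)$ with $r=\theta^2$ one has $\kappa=\gamma$: the variation distance $\|\mu_A^{\eta^{y,0}}-\mu_A^{\eta^{y,2}}\|_x$ vanishes, while the remaining two distances both equal the expression defining $\kappa$, so that $\gamma=\kappa$. Consequently $2\kappa\gamma=2\kappa^2$, and the extremality condition $2\kappa\gamma<1$ becomes exactly
\[
U_i(\theta)=\frac{\big(y_i^3-\theta y_i^2-2\theta y_i+2\big)^2}{2\,y_i^2\,(2\theta+y_i^2)^2}-1<0 .
\]
Thus the whole problem reduces to locating the sign changes of the three scalar functions $U_1,U_2,U_3$ obtained by inserting the respective root $y_i$ of (\ref{34}).

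For part (a), the branch $y_1$ exists for all $\theta>0$ and is available in closed form through the Cardano expression displayed earlier. First I would substitute this into $U_1$ and argue that $U_1$ is positive for small $\theta$, vanishes at a unique point $\theta_1\approx0.1666993311$, and is negative on $(\theta_1,\infty)$. Because the monotonicity of $U_1$ past $\theta_1$ is not transparent from the nested radicals, this step is certified by the \emph{Maple} computation underlying Figure \ref{U1U3}. Combining $U_1<0$ on $(\theta_1,\infty)$ with Proposition \ref{p1}(a), which supplies non-extremality of $\mu_1$ on $(0,\theta_1)$ via the Kesten--Stigum bound, produces the clean dichotomy at the common threshold $\theta_1$.

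For parts (b) and (c) the measures $\mu_2,\mu_3$ exist only for $\theta\ge\theta_c$, and here the endpoint $\theta^*\approx7.729813675$ coincides with the bifurcation threshold $\theta_c$ of Lemma \ref{l3}, at which (\ref{34}) has a double root and hence $y_2=y_3$. Since $y_2,y_3$ admit no usable closed form, I would keep them implicit as the two larger roots of the cubic and evaluate $U_2,U_3$ numerically along these branches. The goal is to show $U_2<0$ precisely on $[\theta^*,\theta_2)$ and $U_3<0$ on all of $[\theta^*,\infty)$, with the upper crossing $\theta_2\approx9.706301628$ of $U_2$ matching the Kesten--Stigum threshold of Proposition \ref{p1}(b); this forces $\mu_2$ to switch from extreme to non-extreme exactly at $\theta_2$, whereas $U_3$ remains negative throughout its range. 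Inclusion of the endpoint $\theta^*$ in both intervals is justified by continuity at the double root, where $U_2(\theta^*)=U_3(\theta^*)<0$, as read off Figures \ref{U2} and \ref{U1U3}.

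The hard part will be the sign analysis of $U_i$: the absence of tractable closed forms for $y_2,y_3$, together with the unwieldy Cardano form of $y_1$, rules out a purely algebraic determination of where $U_i$ changes sign. I would handle this by tracking $U_i$ along the implicitly defined branches $y_i(\theta)$ of (\ref{34}) and certifying the sign changes and the numerical values of $\theta_1,\theta^*,\theta_2$ through the \emph{Maple} plots, which is the level of rigour adopted consistently throughout this section.
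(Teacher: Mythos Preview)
Your proposal is correct and follows essentially the same route as the paper: reduce the extremality criterion $2\kappa\gamma<1$ to the single scalar inequality $U_i(\theta)<0$ via the identity $\kappa=\gamma$ for the solutions $(1,y)$, and then read off the sign of each $U_i$ along its branch from the \emph{Maple} plots (Figures~\ref{U2} and~\ref{U1U3}). The only minor addition in your write-up is the explicit cross-referencing with the Kesten--Stigum thresholds of Proposition~\ref{p1} and the continuity argument at the double-root endpoint $\theta^*$, neither of which the paper spells out.
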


From Proposition \ref{p1} and Proposition \ref{p2} we have the following

\begin{thm} Let $r=\theta^2$. Then the following statements hold\\
a) There exists $\theta_1(\approx0.1666993311)$ such that the measure $\mu_1$ is non-extreme if $\theta\in (0, \theta_1)$  and is extreme if $\theta\in(\theta_1, \infty)$;\\
b) There are values $\theta^*(\approx7.729813675)$ and $\theta_2(\approx9.706301628)$ such that the measure $\mu_2$ is extreme if  $\theta\in[\theta^*, \theta_2)$  and is non-extreme if $\theta\in(\theta_2, \infty)$;\\
c) The measure $\mu_3$ is extreme (where it exists, that is $\theta\in[\theta^*, \infty)$) (see Fig. \ref{y1y2y3grafikox}).\\
\end{thm}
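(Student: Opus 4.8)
The plan is to obtain the theorem as a direct synthesis of Proposition \ref{p1} and Proposition \ref{p2}, organized measure by measure. First I would fix the domain of existence of each of the three measures under the standing assumption $r=\theta^2$. By Theorem \ref{t5} (equivalently Lemma \ref{l3}), the system (\ref{12}) always possesses the solution $(1,y_1)$, so $\mu_1$ is defined for every $\theta\in(0,\infty)$; the two further positive solutions $(1,y_2)$ and $(1,y_3)$ appear only once $\theta\geq\theta_c$, and they coincide at $\theta=\theta_c=\theta^\ast$. Hence $\mu_2$ and $\mu_3$ are defined precisely on $[\theta^\ast,\infty)$. This pins down the parameter range over which each assertion must be stated.

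Next I would overlay, for each measure separately, the non-extremality region coming from the Kesten-Stigum bound $k\lambda_{\max}^2>1$ (here $k=2$) with the extremality region coming from the criterion $k\kappa\gamma<1$. For $\mu_1$, Proposition \ref{p1}a gives non-extremality on $(0,\theta_1)$ via $2\lambda_{\max,1}^2>1$, while Proposition \ref{p2}a gives extremality on $(\theta_1,\infty)$ via $2\kappa\gamma<1$; together these account for the whole existence interval $(0,\infty)$ with a single transition at $\theta_1$. For $\mu_2$, Proposition \ref{p2}b yields extremality on $[\theta^\ast,\theta_2)$ and Proposition \ref{p1}b yields non-extremality on $(\theta_2,\infty)$, again covering the full range $[\theta^\ast,\infty)$ with the transition at $\theta_2$. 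For $\mu_3$, Proposition \ref{p2}c already establishes extremality on all of $[\theta^\ast,\infty)$, so no separate non-extremality analysis is needed.

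The step that requires genuine care -- the main obstacle -- is that the two conditions being combined are one-sided and, in general, not complementary: the inequality $k\lambda_{\max}^2>1$ only certifies non-extremality and $k\kappa\gamma<1$ only certifies extremality, so a priori there may be a gap of parameter values where neither applies. The substantive content of the theorem is precisely that this gap is empty for $\mu_1$ and for $\mu_2$: the non-extremality threshold produced by $\eta_1$ (resp. $\eta_2$) must coincide with the extremality threshold produced by $U_1$ (resp. $U_2$), both equal to $\theta_1\approx0.1666993311$ (resp. $\theta_2\approx9.706301628$). I would therefore verify, using the explicit Cardano expression for $y_1$ and the analogous expressions for $y_2,y_3$, that $\eta_i$ and $U_i$ change sign at the same value, so that the non-extreme and extreme intervals meet without overlap. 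Once this matching of thresholds is confirmed, the three dichotomies assemble into the stated theorem; the only points left undecided are the single transition values $\theta_1$ and $\theta_2$, where the governing inequalities degenerate to equalities, and these isolated points do not affect the description on the open intervals.
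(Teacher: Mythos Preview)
Your plan is correct and matches the paper exactly: the theorem is stated as a direct consequence of Proposition~\ref{p1} and Proposition~\ref{p2}, with no further argument beyond juxtaposing the two propositions measure by measure.

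Your discussion of the potential gap between the Kesten--Stigum region and the $k\kappa\gamma<1$ region is more careful than what the paper does, and it is worth noting that in this specific situation the gap vanishes for an algebraic reason rather than a numerical coincidence. With $r=\theta^{2}$ and $x=1$ the constant term $(2\theta^{4}-\theta^{4}r-2\theta^{2}r+r^{3})xy$ in the quadratic for the eigenvalues is identically zero, so one eigenvalue is $0$ and the other is $\lambda=\dfrac{y(\theta-y)}{2\theta+y^{2}}$. On the other hand, using the cubic $y^{3}-\theta y^{2}+2\theta y-2=0$ one gets $y^{3}-\theta y^{2}-2\theta y+2=4(1-\theta y)$ and $2(1-\theta y)=y^{2}(y-\theta)$, which shows that $\kappa=\gamma=|\lambda|$ for every solution $(1,y_i)$. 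Hence $2\lambda_{\max}^{2}-1$ and $2\kappa\gamma-1$ are literally the same function of $\theta$, so $\eta_i$ and $U_i$ change sign at the same point automatically. This confirms your matching-of-thresholds step without any appeal to Cardano or numerics.
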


\begin{figure}[h!]
    \begin{center}
        \includegraphics[width=10cm]{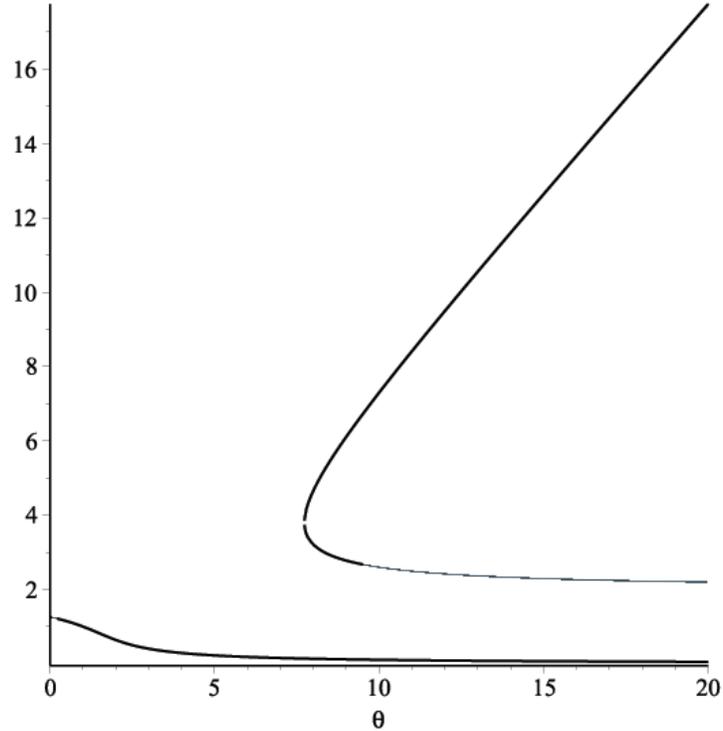}
    \end{center}
    \caption{The graphs of functions $y_i(\theta), i=1,2,3.$ The \textit{bold curves} correspond to regions of the functions where the corresponding TISGM is extreme. The \textit{thin curves} correspond to regions of the functions where the corresponding TISGM is non-extreme} \label{y1y2y3grafikox}
\end{figure}

\section*{ Acknowledgements}
The authors are greatly indebted to Professor U.A. Rozikov for suggesting the problem and for many stimulating conversations.

\end{document}